\documentclass[journal]{IEEEtran}
\usepackage{amsmath,amssymb,amsfonts,bm}
\usepackage{amsthm}
\usepackage{booktabs}
\usepackage{graphicx}
\usepackage{cite}
\usepackage[colorlinks=false,hidelinks]{hyperref}
\usepackage{orcidlink}

\newcommand{\im}{\jmath}                        
\newcommand{\Imag}{\operatorname{Im}}
\newcommand{\Real}{\operatorname{Re}}
\newcommand{\sinc}{\operatorname{sinc}}
\newcommand{\diag}{\operatorname{diag}}
\newcommand{\dd}{\mathrm{d}}
\newcommand{\bs}{\mathbf{s}}
\newcommand{\br}{\mathbf{r}}
\newcommand{\bu}{\bar{\mathbf{u}}}
\newcommand{\bk}{\bar{\mathbf{k}}}
\newcommand{\bp}{\mathbf{p}}
\newcommand{\bw}{\mathbf{w}}
\newcommand{\bh}{\mathbf{h}}
\newcommand{\bG}{\mathbf{G}}
\newcommand{\bI}{\mathbf{I}}
\newcommand{\bC}{\mathbf{C}}
\newcommand{\bB}{\mathbf{B}}

\newcommand{\bZ}{\mathbf{Z}}
\newcommand{\bj}{\mathbf{j}}
\newcommand{\bq}{\mathbf{q}}
\newcommand{\surf}{\mathcal{S}}
\newcommand{\Utt}{\mathcal{U}}
\newcommand{\kerZ}{\boldsymbol{\mathcal{Z}}}
\newcommand{\kerR}{\boldsymbol{\mathcal{R}}}
\newcommand{\kerX}{\boldsymbol{\mathcal{X}}}

\newtheorem{theorem}{Theorem}
\newtheorem{proposition}{Proposition}
\newtheorem{lemma}{Lemma}
\newtheorem{corollary}{Corollary}
\newtheorem{remark}{Remark}

\begin{document}

\title{Mutual-Coupling-Aware Movable and Fluid Antennas on Holographic Surfaces: A Wavenumber-Domain Circuit-Field Unification}

\author{Giovanni Iacovelli$^{\orcidlink{0000-0002-3551-4584}}$,~\IEEEmembership{Member,~IEEE}, Chandan~Kumar~Sheemar$^{\orcidlink{0000-0003-1676-5983}}$,~\IEEEmembership{Member,~IEEE}, and \\Symeon Chatzinotas$^{\orcidlink{0000-0001-5122-0001}}$,~\IEEEmembership{Fellow,~IEEE}%
\thanks{The authors are with the Signal Processing and Communications (SIGCOM) Research Group at Interdisciplinary Centre for Security, Reliability and Trust (SnT), University of Luxembourg, 1855 Luxembourg City, Luxembourg.}}

\maketitle

\begin{abstract}
Movable and fluid antenna systems turn antenna position into a design variable. At sub-wavelength spacings, however, their behavior is governed by mutual coupling, modeled today by two disjoint traditions: circuit-theoretic impedance matrices with element-level constants, and field-theoretic kernels with norm-type power constraints. This paper unifies the two. Starting from the impedance kernel of a holographic surface, a Poynting-anchored balance identifies its resistive part with ohmic plus radiated power and its reactive part with stored-energy imbalance, and a circuit-field equivalence shows that the multiport impedance matrix is the kernel sampled at the port separations, in a single closed spherical-Hankel form. In the wavenumber domain the resistive kernel asymptotically diagonalizes in the aperture size: visible modes radiate at closed-form prices, evanescent modes only dissipate, and a flexible port becomes a constant-modulus codeword whose coupling is the pullback of the spectral weight. Coupling-aware multi-user sum-rate maximization over precoders and port positions is then formulated under physical power and voltage constraints and solved by weighted-MMSE and projected-gradient steps with closed-form gradients. A modal relaxation upper-bounds every port configuration and seeds the search by FFT-based codeword projection. A half-wavelength corollary and a superdirectivity margin quantify when coupling hurts, and when it helps.
\end{abstract}

\begin{IEEEkeywords}
Movable antenna, fluid antenna, holographic MIMO, mutual coupling, impedance kernel, wavenumber domain, continuous-aperture arrays, sum-rate maximization.
\end{IEEEkeywords}

\section{Introduction}
\IEEEPARstart{F}{lexible} antenna positioning elevates where a radiator sits to the rank of design variable, as consequential as what it transmits, and this paper treats its two architectures as one. Movable antennas (MAs) \cite{ZhuMA2024Mag,ZhuMA2024TWC,MaMIMOMA2024} continuously reposition discrete radiators at sub-wavelength scale, while fluid antenna systems (FASs) \cite{Wong2021FAS,New2024Tutorial} let a single radiator, liquid-based, pixel-based, or mechanically actuated, occupy any of a large number of ports within a prescribed surface, and switching to the port with the most favorable fading realization yields diversity and multiplexing gains that rival those of much larger conventional arrays \cite{Wong2021FAS,New2024MIMOFAS,Chai2022Port,Wong2022FAMA}. The appeal of the concept lies precisely in its compactness: the ports are packed far below the half-wavelength spacing of classical arrays, so that a hand-held device can skim through hundreds of channel realizations within a fraction of a wavelength. Throughout, MA and FAS are one mathematical object, an antenna whose admissible positions form a continuum (MA) or a finite grid (FAS). We write ``port'' and ``position'' generically, and Section~\ref{sec:fasequiv} makes the unification exact.

This compactness, however, is also where the standard positional abstraction becomes fragile. Communication-theoretic treatments almost invariably describe the ports as independent samples of the channel, correlated through the fading statistics but electromagnetically inert with respect to one another. The abstraction is defensible when a single port is active, the coupling of a radiator with itself is absorbed into its radiation resistance, but it collapses as soon as several ports radiate simultaneously, as in MIMO-FAS \cite{New2024MIMOFAS}, fluid antenna arrays \cite{ZhangEMFAA2026}, or pixel-based implementations where the passive pixels load the active ones even when switched off \cite{New2024Tutorial}. At sub-wavelength spacing the currents on distinct ports interact through the near field, the radiated power ceases to be the sum of per-port powers, and both the achievable beam patterns and the very meaning of the transmit power constraint change. In short: the regime in which FAS is most attractive is exactly the regime in which its standard model is least valid.

\subsection{Two modeling cultures, and a domain in between}
The literature that does take coupling seriously splits into two cultures. The first is circuit-first: antennas are ports of a multiport network \cite{Ivrlac2010}, coupling is encoded in mutual impedances, and the communication model is derived from Kirchhoff-consistent voltage-current relations. This is the route taken by the mutual-coupling appendix of the MIMO-FAS framework \cite{New2024MIMOFAS}, where fixed pre-computed impedance matrices multiply the channel from both sides but remain outside the optimization. By the recent movable-antenna (MA) work of Liao et al. \cite{Liao2026MC}, which optimizes antenna positions and transmit covariance through analytically intractable impedance matrices via trust-region machinery and Sylvester equations, and demonstrates that coupling, far from being an impairment, is a source of capacity gain through superdirectivity. By the electromagnetic-aware fluid antenna array framework of Zhang et al. \cite{ZhangEMFAA2026}, which integrates position-dependent multiport impedances into a current-domain description and jointly optimizes currents and positions for superdirective beamforming and multi-user sum rate, and by load-tuning architectures that manipulate coupling with parasitic elements \cite{Faddoul2025Loads}. Metasurface-based FAS embodiments admit the same treatment through admittance matrices \cite{Ramirez2025Meta,Ramirez2026Beamspace}, and dynamic metasurface antennas have been shown, both theoretically and experimentally, to benefit from strong coupling \cite{delHougne2025Benefits}.

The second culture is field-first, and it is native to holographic MIMO (HMIMO) and continuous-aperture arrays (CAPAs): the transmitter is a current density on a surface, propagation is the dyadic Green function, and coupling is not a matrix but an operator. Pizzo and Lozano \cite{PizzoLozano2025} showed that, for holographic apertures, correlation and coupling are two spatial filters acting in cascade, the former a convolution, the latter a deconvolution, and that antenna patterns can be chosen so that coupling reinforces or counters correlation as the SNR dictates. Coupling is likewise known to reshape the degrees of freedom and radiation efficiency of space-constrained apertures \cite{Yuan2023DoF}. Most recently, Wang et al. \cite{WangCAPA2025} developed a physical CAPA coupling model that retains both polarization and ohmic surface dissipation, characterized the wavenumber-domain coupling kernel, showed that the coupled beampattern is the uncoupled one filtered by that kernel, quantified how polarization displaces the classical half-wavelength coupling nulls, and extended the continuous model to discrete arrays through element current profiles. Coupling-aware multi-user CAPA beamforming has been formulated along the same lines \cite{WangCAPA2026MU}. The wavenumber (spatial-Fourier) representation itself is by now a mature signal-processing substrate for HMIMO, with plane-wave channel expansions \cite{Pizzo2022Fourier}, wavenumber-division multiplexing \cite{Sanguinetti2023WDM}, and a dedicated survey \cite{ZhangDai2026Survey}. It also underpins operator-theoretic mode designs for multi-user holographic transmission \cite{IacovelliHMISO} as well as wavenumber-domain sensing constructions \cite{IacovelliVirtual}.

What makes this second culture directly relevant to FAS is a structural observation made in the channel-estimation literature: a fluid antenna over a surface is a holographic aperture with antenna selection, the port channels being samples of a band-limited electromagnetic field \cite{New2025Oversampling}. The observation has so far been exploited for estimation and reconstruction, where it explains why oversampling below half a wavelength is essential, but not for coupling-aware design. Independently, Fourier-type syntheses have connected beampatterns to port selection \cite{Xu2026Flexible} and antenna positioning \cite{MA2026IFFT}, yet without any electromagnetic coupling in the loop. The flagship wavenumber-domain survey \cite{ZhangDai2026Survey} contains, tellingly, neither fluid antennas nor mutual coupling in its treated basis.

\subsection{The gap}
Table~\ref{tab:sota} organizes the state of the art along the axes that matter here: architecture, coupling model, working domain, and optimization scope. Every coupling-aware FAS/MA optimization to date \cite{Liao2026MC,ZhangEMFAA2026,Faddoul2025Loads} operates in the spatial or current domain with circuit-theoretic matrices, never in the wavenumber representation where the coupling operator becomes asymptotically diagonal. Every wavenumber-domain coupling treatment \cite{PizzoLozano2025,WangCAPA2025,WangCAPA2026MU} concerns fixed apertures, and the FAS-HMIMO bridge \cite{New2025Oversampling} is coupling-free, the only multi-user instance \cite{ZhangEMFAA2026} optimizing continuous currents rather than port selection. The intersection cell, a wavenumber-domain, coupling-consistent formulation of flexible positioning and port selection for multi-user transmission, is empty, and this paper fills it. As a by-product, to the best of our knowledge no prior multi-user FAS/MA formulation imposes the physically correct coupled power and voltage constraints, under which the budget is a position-dependent quadratic form of the impedance kernel rather than a sum of per-port powers.

\begin{table}[!t]
\centering
\caption{State of the art on mutual coupling (MC) in flexible-position and holographic architectures.}
\label{tab:sota}
\scriptsize
\setlength{\tabcolsep}{3pt}
\begin{tabular}{@{}lllll@{}}
\toprule
Reference & Architecture & MC model & Domain & Optimization \\
\midrule
\cite{ZhuMA2024Mag,ZhuMA2024TWC,MaMIMOMA2024} & MA & none & spatial & model{+}pos.{+}cov. \\
\cite{New2024MIMOFAS} & MIMO-FAS & fixed-load $\bZ$ & spatial & select.{+}BF \\
\cite{Liao2026MC} & MA & circuit $\bZ$ & spatial & pos.{+}cov. \\
\cite{ZhangEMFAA2026} & FA array & multiport $\bZ$ & current & currents{+}pos. \\
\cite{Ramirez2025Meta,Ramirez2026Beamspace} & DMA-FAS & admittance & beamspace & codebook \\
\cite{Faddoul2025Loads,delHougne2025Benefits} & loaded/DMA & circuit & spatial & loads{+}atoms \\
\cite{PizzoLozano2025} & HMIMO & circuit$\to$Fourier & wavenumber & patterns \\
\cite{Yuan2023DoF} & HMIMO & corr.{+}efficiency & spatial & analysis \\
\cite{WangCAPA2025,WangCAPA2026MU} & CAPA & resistive kernel & wavenumber & beamforming \\
\cite{ZhangDai2026Survey} & HMIMO & none & wavenumber & survey \\
\cite{New2025Oversampling} & FAS & none & field (Nyq.) & estimation \\
\cite{Xu2026Flexible,MA2026IFFT} & FAS/MA & none & sp.\ Fourier & selection \\
\midrule
\textbf{This work} & \textbf{FAS/MA on surface} & \textbf{full kernel $\kerZ$} & \textbf{wavenumber} & \textbf{pos.{+}precoding} \\
\bottomrule
\end{tabular}
\end{table}

\subsection{Contributions}
Starting from electromagnetic theory, we build a coupling-consistent multi-user communication model on a holographic surface, embed the flexible-position antenna in it as a structured wavenumber-domain object, and optimize it. Specifically:
\begin{itemize}
\item \textbf{Full impedance-kernel model with coupled multi-user constraints.} The downlink of a surface transmitter serving $K$ dipole users is formulated with coupling entering where physics puts it, through the impedance kernel $\kerZ=Z_s\delta\,\bI_3-\im\kappa Z_0\bG$, whose Hermitian split $\kerZ=\kerR+\im\kerX$ separates the resistive part governing average power from the reactive part governing port voltages and stored energy, all in closed spherical-Bessel form (Section~\ref{sec:model}).
\item \textbf{Circuit-field equivalence, settled in both directions.} The multiport impedance matrix of any port array is the pullback of $\kerZ$ under the element current profiles (Theorem~\ref{thm:equiv}): the mutual entries and the radiation self-resistance admit universal point-source limits, the latter reproducing the Hertzian $Z_0\kappa^2\ell^2/(6\pi)$ exactly, while the self-reactance and ohmic self-resistance provably do not, which is why circuit models carry them as element-specific constants (e.g., $73.08+\im42.21\,\Omega$ for the half-wave dipole \cite{New2024MIMOFAS}). The two cultures of Table~\ref{tab:sota} therefore agree wherever both are defined.
\item \textbf{An exact wavenumber dichotomy and diagonalization.} The light circle $\|\bk\|=\kappa$ splits the kernel exactly, resistive part on the visible disc plus ohmic floor, reactive part on the evanescent region (Proposition~\ref{prop:split}). A Parseval identity shows that only the visible spectrum radiates, Fourier modes asymptotically diagonalize the resistive kernel with explicit weight (Lemma~\ref{lem:diag}), and sharply truncated modes have divergent reactive energy, the reason reactive bookkeeping lives at port level.
\item \textbf{Positioning-HMIMO equivalence.} A flexible-position antenna with $A$ active ports is an HMIMO transmitter confined to an $A$-fold superposition of constant-modulus wavenumber codewords: moving a port is a spectral modulation, selecting one a codeword selection, and the coupling between ports is the pullback of the kernel (Theorem~\ref{thm:fas}). The pullback reproduces the transcendental null equations of \cite{WangCAPA2025} and collapses to uncoupled MIMO at half wavelength up to the residual $3/(2\pi^2m^2)$, whose reactive counterpart is an order of magnitude larger, $\approx0.43$ of the radiation resistance (Corollary~\ref{cor:halflambda}).
\item \textbf{Coupling-aware sum-rate maximization.} Multi-user sum-rate maximization over positions and precoders is cast under the coupled power constraint, regularized by the ohmic ratio that caps superdirective gains, with optional voltage constraints from the reactive kernel, and solved by alternating weighted-MMSE precoding with projected-gradient position updates with every gradient closed form, unlike circuit-first formulations requiring numerical matrix differentiation \cite{Liao2026MC}. The equivalence is exploited algorithmically: the modal relaxation furnishes a holographic upper bound, and its FFT projection onto the codeword manifold initializes the search and certifies the price of $A$-port sparsity (Section~\ref{sec:opt}).
\end{itemize}

\subsection{Notation}
Boldface lowercase (uppercase) letters denote vectors (matrices), calligraphic letters sets, surfaces, and kernels, and $(\cdot)^{\mathsf T}$, $(\cdot)^{\mathsf H}$, $(\cdot)^{*}$ transpose, conjugate transpose, and conjugate. $\Real\{\cdot\}$ and $\Imag\{\cdot\}$ act entrywise, $\im$ is the imaginary unit, and the convention $e^{-\im2\pi ft}$ makes outgoing waves carry $e^{+\im\kappa R}$. Points are $\br=[\bar{\br}^{\mathsf T},z]^{\mathsf T}$ with planar part $\bar{\br}\in\mathbb{R}^2$, planar wavevectors are $\bk$ with $k_z=\sqrt{\kappa^2-\|\bk\|^2}$, $\Imag\{k_z\}\ge0$, and $\kappa=2\pi/\lambda$. $j_\ell$, $y_\ell$, $h_\ell=j_\ell+\im y_\ell$ are the spherical Bessel and Hankel functions, $\sinc(x)=j_0(x)$, $\delta_{nm}$ the Kronecker delta, $\bI_d$ the identity. For a kernel $\boldsymbol{\mathcal{K}}(\bs,\bs')$ on $\surf$ we write $\langle \bj_1,\boldsymbol{\mathcal{K}}\bj_2\rangle \triangleq \iint_{\surf\times\surf}\bj_1^{\mathsf H}(\bs)\boldsymbol{\mathcal{K}}(\bs,\bs')\bj_2(\bs')\,\dd\bs\,\dd\bs'$. Electromagnetic fields are set in boldface calligraphic type ($\boldsymbol{\mathcal{E}}$, $\boldsymbol{\mathcal{H}}$) to distinguish them from channel and precoding vectors.

\section{Electromagnetically Consistent Multi-User Communication Model}\label{sec:model}
We begin where the physics begins, with currents, fields, and power, and obtain a multi-user downlink model in which coupling is not appended as a corrective matrix but appears in the constraints through the full impedance kernel, ohmic, radiative, and reactive alike. Section~\ref{sec:wavenumber} then moves the model to the wavenumber domain, where the flexible-position antenna finds its natural description.

\subsection{Propagation and received signal}
Consider a quasi-static narrowband system in which a planar surface $\surf=[-S_x/2,S_x/2]\times[-S_y/2,S_y/2]\subset\{z=0\}$, of area $|\surf|=S_xS_y$, serves $K$ single-antenna users located at $\br_k=[\bar{\br}_k^{\mathsf T},z_k]^{\mathsf T}$, $z_k>0$, through an infinite homogeneous medium. The transmitter synthesizes a monochromatic surface current density $\bj(\bs)\in\mathbb{C}^3$ [A/m], tangential to $\surf$, radiating the electric field
\begin{equation}\label{eq:field}
\boldsymbol{\mathcal{E}}(\br)=\im\kappa Z_0\int_{\surf}\bG(\br,\bs)\,\bj(\bs)\,\dd\bs \quad \text{[V/m]},
\end{equation}
where $Z_0$ is the characteristic impedance of vacuum and
\begin{equation}\label{eq:green}
\bG(\br,\bs)=\left(\bI_3+\frac{\nabla_{\br}\nabla_{\br}^{\mathsf T}}{\kappa^2}\right)\frac{e^{\im\kappa\|\br-\bs\|}}{4\pi\|\br-\bs\|}\in\mathbb{C}^{3\times 3}
\end{equation}
is the dyadic Green function,\footnote{In \cite{WangCAPA2025} the symbol $\bG$ denotes $-\im\kappa Z_0$ times \eqref{eq:green}, i.e., the field-per-unit-current map with the electric constants absorbed. We keep the bare dyadic \eqref{eq:green} and collect the constants in the impedance kernel \eqref{eq:Zkernel}, so that the correspondence between the two conventions is $\bG_{\text{\cite{WangCAPA2025}}}=-\im\kappa Z_0\,\bG$.} depending on $(\br,\bs)$ only through $\boldsymbol{\tau}=\br-\bs$, so we freely write $\bG(\boldsymbol{\tau})$. Its closed forms, and the resistive-reactive split that organizes the entire paper, are collected once and for all.

\begin{lemma}[Closed forms and split of the dyadic kernel]\label{lem:closed}
With $R=\|\boldsymbol{\tau}\|$, $\hat{\boldsymbol{\tau}}=\boldsymbol{\tau}/R$, $x=\kappa R$, and $g(R)=e^{\im\kappa R}/(4\pi R)$,
\begin{align}
\bG(\boldsymbol{\tau})&=a(R)\,\bI_3+b(R)\,\hat{\boldsymbol{\tau}}\hat{\boldsymbol{\tau}}^{\mathsf T}\label{eq:greenclosed}\\
&=\frac{\im\kappa}{4\pi}\Big[\Big(h_0(x)-\frac{h_1(x)}{x}\Big)\bI_3+h_2(x)\,\hat{\boldsymbol{\tau}}\hat{\boldsymbol{\tau}}^{\mathsf T}\Big],\label{eq:hankelform}\\
a(R)&=g(R)\Big(1+\tfrac{\im}{\kappa R}-\tfrac{1}{(\kappa R)^2}\Big),\nonumber\\
b(R)&=-g(R)\Big(1+\tfrac{3\im}{\kappa R}-\tfrac{3}{(\kappa R)^2}\Big),\nonumber
\end{align}
and, splitting $h_\ell=j_\ell+\im y_\ell$,
\begin{align}
\Imag\{\bG(\boldsymbol{\tau})\}&=\frac{\kappa}{4\pi}\Big[\Big(j_0(x)-\frac{j_1(x)}{x}\Big)\bI_3+j_2(x)\,\hat{\boldsymbol{\tau}}\hat{\boldsymbol{\tau}}^{\mathsf T}\Big],\label{eq:imG}\\
\Real\{\bG(\boldsymbol{\tau})\}&=-\frac{\kappa}{4\pi}\Big[\Big(y_0(x)-\frac{y_1(x)}{x}\Big)\bI_3+y_2(x)\,\hat{\boldsymbol{\tau}}\hat{\boldsymbol{\tau}}^{\mathsf T}\Big].\label{eq:reG}
\end{align}
The imaginary part is entire, with $\Imag\{\bG(\mathbf{0})\}=\tfrac{\kappa}{6\pi}\bI_3$. The real part diverges as $R^{-1}$ through $y_0$ and as $R^{-3}$ through $y_1/x$ and $y_2$.
\end{lemma}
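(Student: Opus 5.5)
The plan is a direct computation of the Hessian of the scalar Green function $g$, followed by rewriting the resulting radial coefficients in spherical Hankel functions. No earlier result is needed beyond the definitions of $j_\ell$, $y_\ell$ and $h_\ell$.

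\emph{Step 1: Hessian of a radial function.} Since $g$ depends only on $R=\|\boldsymbol{\tau}\|$ and $\nabla_{\br}=\nabla_{\boldsymbol{\tau}}$, we use
\[
\nabla\nabla^{\mathsf T} f(R)=\frac{f'(R)}{R}\,(\bI_3-\hat{\boldsymbol{\tau}}\hat{\boldsymbol{\tau}}^{\mathsf T})+f''(R)\,\hat{\boldsymbol{\tau}}\hat{\boldsymbol{\tau}}^{\mathsf T}.
\]
Differentiating $g=e^{\im\kappa R}/(4\pi R)$ gives
\[
g'=g\,(\im\kappa-1/R),\qquad g''=g\,\big[(\im\kappa-1/R)^2+1/R^2\big].
\]
Substituting into \eqref{eq:green} yields
\[
a=g+\frac{g'}{\kappa^2R},\qquad b=\frac{g''-g'/R}{\kappa^2}.
\]
Expanding these reproduces the stated $a(R)$ and $b(R)$. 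For example,
\[
\frac{g''-g'/R}{\kappa^2}=\frac{g}{\kappa^2}\Big(-\kappa^2-\frac{3\im\kappa}{R}+\frac{3}{R^2}\Big),
\]
which is $-g\big(1+3\im/x-3/x^2\big)$. This establishes \eqref{eq:greenclosed}.

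\emph{Step 2: Hankel form.} We insert the elementary closed forms
\begin{align*}
h_0(x)&=-\im e^{\im x}/x,\\
h_1(x)&=-e^{\im x}(1/x+\im/x^2),\\
h_2(x)&=\im e^{\im x}(1/x+3\im/x^2-3/x^3).
\end{align*}
With $g=\frac{\kappa}{4\pi}\frac{e^{\im x}}{x}$, we have $\frac{\im\kappa}{4\pi}h_2=-g\,(1+3\im/x-3/x^2)=b$, and
\[
\frac{\im\kappa}{4\pi}\Big(h_0-\frac{h_1}{x}\Big)=g\Big(1+\frac{\im}{x}-\frac{1}{x^2}\Big)=a.
\]
This is a line-by-line check, which gives \eqref{eq:hankelform}.

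\emph{Step 3: real and imaginary parts.} For real $x>0$, $j_\ell$ and $y_\ell$ are real. Multiplying $h_\ell=j_\ell+\im y_\ell$ by $\im\kappa/(4\pi)$ therefore gives imaginary part $\frac{\kappa}{4\pi}j_\ell$ and real part $-\frac{\kappa}{4\pi}y_\ell$. Because the bracket in \eqref{eq:hankelform} is linear in the $h_\ell$ with real coefficients ($1$, $-1/x$, $1$) and the dyad $\hat{\boldsymbol{\tau}}\hat{\boldsymbol{\tau}}^{\mathsf T}$ is real, \eqref{eq:imG} and \eqref{eq:reG} follow.

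\emph{Step 4: regularity at the origin.} This is the only delicate point, since $\hat{\boldsymbol{\tau}}$ is undefined at $\boldsymbol{\tau}=\mathbf 0$.
\begin{itemize}
\item The imaginary part equals $\kappa/(4\pi)$ times the radiation kernel $\frac{1}{4\pi}\int_{S^2}(\bI_3-\hat{\mathbf u}\hat{\mathbf u}^{\mathsf T})e^{\im\kappa\hat{\mathbf u}\cdot\boldsymbol{\tau}}\dd\Omega$. This is an integral of an entire function of $\boldsymbol{\tau}$ over a compact set, hence entire. Alternatively, one can argue from the power series $j_\ell(x)=O(x^\ell)$, together with the fact that $j_2(x)\hat{\boldsymbol{\tau}}\hat{\boldsymbol{\tau}}^{\mathsf T}=\frac{j_2(x)}{x^2}\kappa^2\boldsymbol{\tau}\boldsymbol{\tau}^{\mathsf T}$ and that $j_2(x)/x^2$ and $j_1(x)/x$ are even entire functions of $x$, hence entire in $R^2=\|\boldsymbol{\tau}\|^2$.
\item The value at the origin follows from $j_0(0)=1$, $j_1(x)/x\to1/3$ and $j_2(0)=0$, giving $\Imag\{\bG(\mathbf 0)\}=\frac{\kappa}{4\pi}\cdot\frac{2}{3}\bI_3=\frac{\kappa}{6\pi}\bI_3$.
\item For the real part, $y_0=-\cos x/x\sim -1/x$ produces the $R^{-1}$ singularity. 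Moreover $y_1/x\sim -1/x^3$ and $y_2\sim -3/x^3$, so these terms produce the $R^{-3}$ singularity; their leading terms do not cancel in general because they multiply different tensors ($\bI_3$ versus $\hat{\boldsymbol{\tau}}\hat{\boldsymbol{\tau}}^{\mathsf T}$).
\end{itemize}
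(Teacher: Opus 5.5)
Your proof is correct and follows essentially the same route as the paper's Appendix~\ref{app:closed}. The common steps are the radial Hessian of $g$ with $g'=(\im\kappa-1/R)g$, collecting powers of $(\kappa R)^{-1}$ into $a$ and $b$, matching against the elementary closed forms of $h_0,h_1,h_2$ (the paper derives $h_2$ from the recurrence $h_2=(3/x)h_1-h_0$ rather than quoting it), the split via $\im h_\ell=\im j_\ell-y_\ell$, and the small-argument behaviour of $j_\ell$ and $y_\ell$. Your factorization $j_2(x)\hat{\boldsymbol{\tau}}\hat{\boldsymbol{\tau}}^{\mathsf T}=\kappa^2\,(j_2(x)/x^2)\,\boldsymbol{\tau}\boldsymbol{\tau}^{\mathsf T}$ justifies entireness more carefully than the paper's ``even entire series'' remark, and it is the argument to keep: the plane-wave alternative you mention relies on \eqref{eq:planewave}, which the paper only proves later, in Appendix~\ref{app:poynting}.
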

\begin{IEEEproof}
Please refer to Appendix~\ref{app:closed}.
\end{IEEEproof}

Two consequences of Lemma~\ref{lem:closed} are used throughout. As $\kappa R\to\infty$, $a\to g$ and $b\to-g$: each gradient in \eqref{eq:green} acts as multiplication by $\im\kappa\hat{\boldsymbol{\tau}}$, and $\bG$ collapses to the transverse projector $g(R)(\bI_3-\hat{\boldsymbol{\tau}}\hat{\boldsymbol{\tau}}^{\mathsf T})$: the far field carries no component along the propagation direction. And the bounded/singular asymmetry between \eqref{eq:imG} and \eqref{eq:reG} will separate radiation from reaction at every level of the model.\footnote{At coincident arguments $\Real\{\bG\}$ is understood in the principal-value sense with the standard depolarization-dyadic extraction. No quantity below evaluates $\Real\{\bG\}$ at $\boldsymbol{\tau}=\mathbf{0}$.}

User $k$ is a dipole with unit polarization $\boldsymbol{\psi}_k\in\mathbb{C}^3$, $\|\boldsymbol{\psi}_k\|=1$, its extent being much smaller than the source. It observes
\begin{align}
y_k&=\boldsymbol{\psi}_k^{\mathsf H}\boldsymbol{\mathcal{E}}(\br_k)+n_k=\int_{\surf} \mathbf{g}_k^{\mathsf H}(\bs)\,\bj(\bs)\,\dd\bs+n_k,\label{eq:rx}\\
\mathbf{g}_k^{\mathsf H}(\bs)&\triangleq \im\kappa Z_0\,\boldsymbol{\psi}_k^{\mathsf H}\bG(\br_k,\bs)\in\mathbb{C}^{1\times 3},\label{eq:contchan}
\end{align}
with $n_k\sim\mathcal{CN}(0,\sigma_k^2)$. The framework is agnostic with respect to the propagator, and a kernel accounting for fading or reflections can replace \eqref{eq:green} in the continuous channel $\mathbf{g}_k$, exactly as in operator-based holographic designs \cite{IacovelliHMISO}.

\subsection{The impedance kernel and complex power}\label{sec:coupling}
Idealized holographic and FAS models constrain the current norm, $\int_\surf\|\bj\|^2\dd\bs\le p$: analytically convenient, but physically agnostic \cite[Rem.~2]{WangCAPA2025}, because the supplied power is set by the fields the current works against, which are nonlocal functionals of the current itself.

Two such fields act, and they must not be confused. The radiated field \eqref{eq:field}, evaluated back on $\surf$, opposes the sources (back electromotive force: hence the minus sign below). The internal field $\boldsymbol{\mathcal{E}}_{\rm in}=Z_s\bj$ lives inside the conductor: modeling the aperture as a thin sheet of conductivity $\sigma$ and thickness $t$, Ohm's law requires it to push the current through the lossy sheet, with $Z_s\triangleq1/(\sigma t)\ge0$ the surface resistance. Since products of phasors time-average as $\langle\Real\{a e^{-\im\omega t}\}\Real\{b e^{-\im\omega t}\}\rangle=\tfrac12\Real\{ab^{*}\}$, retaining also the imaginary part, which tracks energy exchanged with zero net transfer, defines the complex power the generators deliver:
\begin{align}
P_{\rm cx}&=\frac{1}{2}\int_\surf \bj^{\mathsf H}\boldsymbol{\mathcal{E}}_{\rm in}\,\dd\bs-\frac{1}{2}\int_\surf \bj^{\mathsf H}\boldsymbol{\mathcal{E}}\,\dd\bs
=\frac{1}{2}\big\langle \bj,\,\kerZ\,\bj\big\rangle,\label{eq:cplxpower}\\
\kerZ(\bs,\bs')&\triangleq Z_s\,\delta(\bar{\bs}-\bar{\bs}')\,\bI_3-\im\kappa Z_0\,\bG(\bs,\bs')\quad[\Omega/\text{m}^2],\label{eq:Zkernel}
\end{align}
where the impedance kernel \eqref{eq:Zkernel} maps the current density into the voltage density each point must be driven with: the continuum version of $\tfrac12VI^{*}$, and precisely the coupling kernel of \cite{WangCAPA2025}.

The kernel splits along the two parts of $\bG$, with a twist worth flagging: the prefactor $-\im$ exchanges their roles, $-\im(\Real\{\bG\}+\im\Imag\{\bG\})=\Imag\{\bG\}-\im\Real\{\bG\}$, so the imaginary part of $\bG$ carries the resistive physics and the ohmic delta joins it:
\begin{equation}\label{eq:RXsplit}
\kerZ=\kerR+\im\,\kerX,\quad
\begin{cases}
\kerR(\boldsymbol{\tau})=Z_s\,\delta(\bar{\boldsymbol{\tau}})\,\bI_3+\kappa Z_0\,\Imag\{\bG(\boldsymbol{\tau})\},\\[2pt]
\kerX(\boldsymbol{\tau})=-\kappa Z_0\,\Real\{\bG(\boldsymbol{\tau})\}.
\end{cases}
\end{equation}
By Lemma~\ref{lem:closed} both kernels are real and symmetric, so each quadratic form is individually real (conjugate and relabel variables), and the real part of $P_{\rm cx}$ is carried by $\kerR$ exactly:
\begin{align}
P_{\rm cx}&=P_{\rm em}+\im\, Q,\nonumber\\
P_{\rm em}&=\tfrac12\langle\bj,\kerR\,\bj\rangle=P_\Omega+P_{\rm rad},\qquad Q=\tfrac12\langle\bj,\kerX\,\bj\rangle.\label{eq:realreactive}
\end{align}
So far these are names attached to quadratic forms. The physics is the content of the following.

\begin{proposition}[Power balance and physical identification]\label{prop:power}
Let $\bj$ radiate $\boldsymbol{\mathcal{E}}$, $\boldsymbol{\mathcal{H}}$ in free space according to \eqref{eq:field}, let $\tilde{\bj}(\mathbf{k})=\int_\surf\bj(\bs)e^{-\im\mathbf{k}\cdot\bs}\dd\bs$ be the current spectrum, and let the aperture conductor be the thin sheet above ($t\to0$ at fixed $Z_s$). Then
\begin{equation}\label{eq:poynting}
P_{\rm cx}=P_\Omega+P_{\rm rad}+\im\,2\omega\big(W_e-W_m\big),
\end{equation}
with the identifications:

\!(i) the radiation kernel admits the plane-wave decomposition
\begin{equation}\label{eq:planewave}
\Imag\{\bG(\boldsymbol{\tau})\}=\frac{\kappa}{16\pi^2}\int_{\mathbb{S}^2}\big(\bI_3-\hat{\mathbf{k}}\hat{\mathbf{k}}^{\mathsf T}\big)\,e^{\im\kappa\hat{\mathbf{k}}\cdot\boldsymbol{\tau}}\,\dd\Omega(\hat{\mathbf{k}}),
\end{equation}
and consequently the radiative form equals the flux through any sufficiently large sphere, real in the limit, and the energy of the transverse current spectrum on the sphere of radius $\kappa$:
\begin{align}
P_{\rm rad}&=\frac{\kappa Z_0}{2}\big\langle\bj,\Imag\{\bG\}\bj\big\rangle
=\lim_{r_0\to\infty}\frac12\oint_{r=r_0}\!\big(\boldsymbol{\mathcal{E}}\times\boldsymbol{\mathcal{H}}^{*}\big)\cdot\hat{\mathbf{n}}\,\dd S\nonumber\\
&=\frac{\kappa^2 Z_0}{32\pi^2}\int_{\mathbb{S}^2}\big\|\big(\bI_3-\hat{\mathbf{k}}\hat{\mathbf{k}}^{\mathsf T}\big)\tilde{\bj}(\kappa\hat{\mathbf{k}})\big\|^2\dd\Omega\;\ge0.\label{eq:Pradforms}
\end{align}

(ii) $P_\Omega=\tfrac{Z_s}{2}\int_\surf\|\bj\|^2\dd\bs$ is the sheet limit of the volumetric Joule dissipation $\tfrac12\int\sigma\|\boldsymbol{\mathcal{E}}_{\rm in}\|^2\dd V$.

(iii) $Q=\tfrac12\langle\bj,\kerX\bj\rangle=2\omega(W_e-W_m)$, with $W_e$, $W_m$ the time-average stored electric and magnetic energies, individually divergent with the ball radius but with a finite difference.
\end{proposition}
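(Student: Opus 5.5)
The plan is to prove the plane-wave identity \eqref{eq:planewave} first, since the radiative identifications follow from it. I would start from the scalar Funk--Hecke identity
\[
\frac{1}{4\pi}\int_{\mathbb{S}^2}e^{\im\kappa\hat{\mathbf{k}}\cdot\boldsymbol{\tau}}\,\dd\Omega=j_0(\kappa R),
\]
which says that $\Imag\{g(R)\}=\kappa j_0(\kappa R)/(4\pi)=\frac{\kappa}{16\pi^2}\int_{\mathbb{S}^2}e^{\im\kappa\hat{\mathbf{k}}\cdot\boldsymbol{\tau}}\dd\Omega$. Because the operator $\bI_3+\nabla\nabla^{\mathsf T}/\kappa^2$ in \eqref{eq:green} is real, it commutes with taking imaginary parts. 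Applying it under the integral turns each gradient into $\im\kappa\hat{\mathbf{k}}$, which gives exactly the transverse projector $\bI_3-\hat{\mathbf{k}}\hat{\mathbf{k}}^{\mathsf T}$. As a consistency check, setting $\boldsymbol{\tau}=\mathbf{0}$ gives $\frac{\kappa}{16\pi^2}\cdot\frac{8\pi}{3}\bI_3=\frac{\kappa}{6\pi}\bI_3$, matching Lemma~\ref{lem:closed}.

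Next I substitute \eqref{eq:planewave} into $\frac{\kappa Z_0}{2}\langle\bj,\Imag\{\bG\}\bj\rangle$ and apply Fubini, which is legitimate because the kernel is bounded and entire. The double integral then factorizes as
\[
\tilde{\bj}(\kappa\hat{\mathbf{k}})^{\mathsf H}(\bI_3-\hat{\mathbf{k}}\hat{\mathbf{k}}^{\mathsf T})\tilde{\bj}(\kappa\hat{\mathbf{k}}).
\]
Since the projector is idempotent and symmetric, this equals $\|(\bI_3-\hat{\mathbf{k}}\hat{\mathbf{k}}^{\mathsf T})\tilde{\bj}\|^2$. The prefactor becomes $\kappa^2Z_0/(32\pi^2)$, and nonnegativity is immediate.

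For the flux form, I use the far-field asymptotics noted after Lemma~\ref{lem:closed}. As $r\to\infty$,
\[
\boldsymbol{\mathcal{E}}\simeq\im\kappa Z_0\frac{e^{\im\kappa r}}{4\pi r}(\bI_3-\hat{\br}\hat{\br}^{\mathsf T})\tilde{\bj}(\kappa\hat{\br}),\qquad \boldsymbol{\mathcal{H}}\simeq\hat{\br}\times\boldsymbol{\mathcal{E}}/Z_0 .
\]
This gives $\frac12(\boldsymbol{\mathcal{E}}\times\boldsymbol{\mathcal{H}}^*)\cdot\hat{\br}=\|\boldsymbol{\mathcal{E}}\|^2/(2Z_0)$. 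Integrating with $\dd S=r^2\dd\Omega$ yields the same spectral expression, and the $O(r^{-1})$ corrections, including the imaginary ones, vanish in the limit.

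For (ii), I model the sheet as $\bj=\boldsymbol{\mathcal{J}}t$ with $\boldsymbol{\mathcal{J}}=\sigma\boldsymbol{\mathcal{E}}_{\rm in}$. Then
\[
\tfrac12\int\sigma\|\boldsymbol{\mathcal{E}}_{\rm in}\|^2\dd V=\tfrac12\int_\surf\frac{\|\bj\|^2}{\sigma t}\,\dd\bs=\tfrac{Z_s}{2}\int_\surf\|\bj\|^2\dd\bs,
\]
which holds exactly at fixed $Z_s$ and therefore survives the limit $t\to0$. This term coincides with the delta part of $\kerR$ in \eqref{eq:RXsplit}.

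For (iii) and the full balance \eqref{eq:poynting}, I apply complex Poynting's theorem with the $e^{-\im\omega t}$ convention on the ball $B_{r_0}$ minus the sheet region. The identity reads
\[
-\tfrac12\int\bj^{\mathsf H}\boldsymbol{\mathcal{E}}\,\dd\bs=\tfrac12\oint(\boldsymbol{\mathcal{E}}\times\boldsymbol{\mathcal{H}}^*)\cdot\hat{\mathbf{n}}\,\dd S+\im2\omega(W_e-W_m)_{B_{r_0}}.
\]
The sign of the reactive term has to be tracked carefully against the convention. Adding the ohmic term from (ii) recovers $P_{\rm cx}$ through \eqref{eq:cplxpower}. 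Taking real parts gives $P_\Omega+P_{\rm rad}$, using the flux limit above, and taking imaginary parts gives
\[
Q=2\omega(W_e-W_m)+\Imag\{\text{flux}\}.
\]

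The main obstacle is this last step. The stored energies $W_e,W_m$ each grow linearly in $r_0$, because the far-field energy density decays only as $r^{-2}$. Their difference, however, converges: in the far zone $\|\boldsymbol{\mathcal{E}}\|^2=Z_0^2\|\boldsymbol{\mathcal{H}}\|^2$ up to $O(r^{-3})$, so the integrand of $W_e-W_m$ is integrable at infinity. Meanwhile $\Imag\{\text{flux}\}\to0$, since the next-order terms of $\boldsymbol{\mathcal{E}}\times\boldsymbol{\mathcal{H}}^*$ decay as $r^{-3}$. A second subtlety is near the sheet, where $\Real\{\bG\}$ has its $R^{-3}$ singularity. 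I would treat $\langle\bj,\kerX\bj\rangle$ as a limit over a thin slab, or equivalently in the distributional sense of the footnote to Lemma~\ref{lem:closed}, for sufficiently smooth $\bj$. Whether the energies are finite near the sources depends on $\bj$ vanishing suitably at the edge of $\surf$, so the proof should state that regularity assumption explicitly.
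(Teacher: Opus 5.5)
Your proposal is correct and follows the paper's proof essentially step for step: the same four ingredients in the same roles, namely the scalar Funk--Hecke moment lifted by $\bI_3+\nabla\nabla^{\mathsf T}/\kappa^2$, Fubini factorization of the radiative form, the far-field flux and the complex Poynting identity on a large ball, and the thin-sheet Joule limit. You differ only in justifying the lift by the reality of that operator, where the paper cross-checks angular moments against Lemma~\ref{lem:closed}, and in stating explicitly the edge regularity of $\bj$ that finiteness of $Q$ requires, which the paper concedes only later for truncated Fourier modes. One tightening, needed equally in the paper's Step~3: an $O(r^{-3})$ bound on $\varepsilon_0\|\boldsymbol{\mathcal{E}}\|^2-\mu_0\|\boldsymbol{\mathcal{H}}\|^2$ integrated against $r^2\,\dd r$ gives only a logarithmic bound, so read the convergence of $W_e-W_m$ off your exact identity $2\omega(W_e-W_m)(r_0)=Q-\Imag\{\text{flux}(r_0)\}$ together with $\Imag\{\text{flux}(r_0)\}=O(r_0^{-1})$ (the $r^{-3}$ coefficient in fact cancels pointwise, leaving an integrable $O(r^{-4})$).
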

\begin{IEEEproof}
Please refer to Appendix~\ref{app:poynting}.
\end{IEEEproof}

The reading of Proposition~\ref{prop:power} dictates where each kernel enters the design. The resistive kernel $\kerR$ prices average power. It is bounded away from the ohmic delta and positive semidefinite by the last member of \eqref{eq:Pradforms}: an integral of squared norms, the far-field intensity \cite[App.~A]{WangCAPA2025}. All three representations will serve: the kernel form in the constraints, the spectral form in Section~\ref{sec:wavenumber}, the flux form as physical anchor. The reactive kernel $\kerX$ contributes no average power but computes, from aperture quantities alone, the stored imbalance the generators circulate every half-cycle (the Green function has pre-integrated the near field). Singular as $(\kappa R)^{-3}$, it determines the port voltages sustaining a given current, hence drive feasibility, matching, and bandwidth, and enters through the voltage constraints of Section~\ref{sec:constraints}.

\subsection{Port model and the circuit-field equivalence}\label{sec:ports}
The two coupling cultures of Table~\ref{tab:sota} keep different books: field theory prices power through a kernel acting on current densities. Circuit theory prices voltages and currents through a matrix acting on feeds \cite{Ivrlac2010}. They meet on the complex power, since both must assign the same $P_{\rm cx}$ to every excitation, and this single requirement determines the matrix from the kernel, closed forms included. Table~\ref{tab:dict} states the resulting dictionary. Theorem~\ref{thm:equiv} proves it. The ports are identical, translated elements: port $a$ carries $\boldsymbol{\xi}_a(\bs)=\boldsymbol{\xi}(\bar{\bs}-\bu_a)$, with common profile $\boldsymbol{\xi}(\bar{\bs})=\xi(\bar{\bs})\,\bp$ real, even, tangentially polarized along a fixed real unit $\bp\perp\hat{\mathbf{z}}$, supported on a set of diameter $\Delta\ll\lambda$, and normalized to $\int\xi\,\dd\bar{\bs}=\ell$ [m]. The array current is $\bj=\sum_{a}\iota_a\boldsymbol{\xi}_a$ with feed currents $\boldsymbol{\iota}\in\mathbb{C}^A$.

\begin{table}[!t]
\centering
\caption{The field-circuit dictionary established by Theorem~\ref{thm:equiv}.}
\label{tab:dict}
\footnotesize
\begin{tabular}{@{}ll@{}}
\toprule
Field theory (kernels) & Circuit theory (matrices) \\
\midrule
current density $\bj(\bs)$ & feed currents $\boldsymbol{\iota}$ \\
impedance kernel $\kerZ$ in \eqref{eq:cplxpower} & $\bZ(\Utt)$: pullback \eqref{eq:portZ} \\
voltage density $(\kerZ\bj)(\bs)$ & port voltages $\mathbf{v}=\bZ(\Utt)\boldsymbol{\iota}$ \\
$P_{\rm cx}=\tfrac12\langle\bj,\kerZ\bj\rangle$ & $P_{\rm cx}=\tfrac12\boldsymbol{\iota}^{\mathsf H}\bZ(\Utt)\boldsymbol{\iota}$ \\
resistive kernel $\kerR$: $P_\Omega+P_{\rm rad}$ & $\Real\{\bZ(\Utt)\}$: same powers \\
reactive kernel $\kerX$: $2\omega(W_e-W_m)$ & $\Imag\{\bZ(\Utt)\}$: same imbalance \\
closed forms \eqref{eq:imG}-\eqref{eq:reG} & mutual entries $R_r\vartheta$, \eqref{eq:vartheta} \\
element profile $\boldsymbol{\xi}$ & diagonal $(R_\Omega,\ R_r,\ X_A)$ \\
\bottomrule
\end{tabular}
\end{table}
\newpage
\begin{theorem}[Circuit-field equivalence]\label{thm:equiv}
The following hold.

(i) (One power, one matrix.) Substituting the port current into \eqref{eq:cplxpower} gives $P_{\rm cx}=\tfrac12\,\boldsymbol{\iota}^{\mathsf H}\bZ(\Utt)\,\boldsymbol{\iota}$ with
\begin{align}
[\bZ(\Utt)]_{ab}&=\langle\boldsymbol{\xi}_a,\kerZ\,\boldsymbol{\xi}_b\rangle=(\xi\star\zeta_\bp\star\xi)(\bu_a-\bu_b),\label{eq:portZ}\\
\zeta_\bp(\bar{\boldsymbol{\tau}})&\triangleq Z_s\,\delta(\bar{\boldsymbol{\tau}})-\im\kappa Z_0\,\bp^{\mathsf H}\bG(\bar{\boldsymbol{\tau}})\bp,\nonumber
\end{align}
$\star$ denoting planar correlation. This complex symmetric matrix is the multiport impedance matrix of circuit theory: matching complex power for every drive determines it uniquely, and its entries recover the induced-EMF mutual impedance.

(ii) (The split descends.) $\Real\{\bZ(\Utt)\}$ and $\Imag\{\bZ(\Utt)\}$ are the port-sampled resistive and reactive kernels of \eqref{eq:RXsplit}, so Proposition~\ref{prop:power} holds verbatim at port level:
$\tfrac12\boldsymbol{\iota}^{\mathsf H}\Real\{\bZ\}\boldsymbol{\iota}=P_\Omega+P_{\rm rad}$ and $\tfrac12\boldsymbol{\iota}^{\mathsf H}\Imag\{\bZ\}\boldsymbol{\iota}=2\omega(W_e-W_m)$.\label{eq:portpower}

(iii) (Common closed form.) As $\kappa\Delta\to0$, the mutual entries ($a\ne b$, $\|\bu_a-\bu_b\|\ge\Delta$) converge to the kernel sampled at the separations: $[\bZ]_{ab}\to R_r\,\vartheta(\bu_a-\bu_b)$, $R_r=Z_0\kappa^2\ell^2/(6\pi)$, with, for $x=\kappa\|\bar{\boldsymbol{\tau}}\|$ and $c=\bp^{\mathsf T}\hat{\boldsymbol{\tau}}$,
\begin{equation}\label{eq:vartheta}
\vartheta(\bar{\boldsymbol{\tau}})\triangleq\rho(\bar{\boldsymbol{\tau}})+\im\chi(\bar{\boldsymbol{\tau}})=\frac{3}{2}\Big[h_0(x)-\frac{h_1(x)}{x}+h_2(x)\,c^2\Big],
\end{equation}
whose real part $\rho$ ($j_\ell$ family) is the normalized resistive kernel \eqref{eq:imG} and whose imaginary part $\chi$ ($y_\ell$ family) is the normalized reactive kernel \eqref{eq:reG}. The radiation self-resistance also admits the point limit, equal to the Hertzian $R_r$.

(iv) (Profile-dependent diagonal.) The self-reactance $X_A\triangleq\Imag\{[\bZ]_{aa}\}$ and the ohmic self-resistance $R_\Omega\triangleq Z_s\int|\xi|^2$ are finite for every fixed profile but diverge as $\kappa\Delta\to0$ (as $(\kappa\Delta)^{-3}$ and $\Delta^{-2}$): they are irreducibly properties of the element.
Consequently $\bZ(\Utt)=(R_\Omega+R_r+\im X_A)\bI_A+R_r[\vartheta(\bu_a-\bu_b)]_{a\ne b}+O((\kappa\Delta)^2)$ on the mutual terms.
\end{theorem}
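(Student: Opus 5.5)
Part (i) is a substitution. We insert $\bj=\sum_a\iota_a\boldsymbol{\xi}_a$ into \eqref{eq:cplxpower} and expand the sesquilinear form, giving $P_{\rm cx}=\tfrac12\sum_{a,b}\iota_a^{*}\iota_b\langle\boldsymbol{\xi}_a,\kerZ\boldsymbol{\xi}_b\rangle$. The kernel depends only on $\boldsymbol{\tau}$, the profiles are translates, and $\xi$ is real and even. A change of variables $\bar{\bs}\mapsto\bar{\bs}+\bu_a$, $\bar{\bs}'\mapsto\bar{\bs}'+\bu_b$ then turns each entry into $(\xi\star\zeta_\bp\star\xi)(\bu_a-\bu_b)$, where $\zeta_\bp$ is the scalar $\bp^{\mathsf H}\kerZ\bp$.

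Complex symmetry of $\bZ$ follows from the reciprocity $\bG(\boldsymbol{\tau})=\bG(-\boldsymbol{\tau})=\bG^{\mathsf T}$, which can be read off \eqref{eq:greenclosed}. Uniqueness needs a separate argument, because a quadratic form $\boldsymbol{\iota}^{\mathsf H}\bZ\boldsymbol{\iota}$ determines an arbitrary complex $\bZ$ only through polarization. Since $\kerR$ and $\kerX$ are real symmetric, $\Real\{\bZ\}$ and $\Imag\{\bZ\}$ are real symmetric, and matching $P_{\rm em}$ and $Q$ separately for all $\boldsymbol{\iota}$ fixes each one. The induced-EMF reading is then immediate: $[\bZ]_{ab}$ is $-\int\boldsymbol{\xi}_a^{\mathsf H}\boldsymbol{\mathcal{E}}_b$ plus the ohmic term, per unit feed currents.

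For part (ii), we split $\zeta_\bp$ according to \eqref{eq:RXsplit}. The real part of each entry is the pullback of $\kerR$ and the imaginary part is the pullback of $\kerX$; this uses the real profile and the real $\bp$. The port-level identities are then Proposition~\ref{prop:power} applied to the specific current $\bj=\sum\iota_a\boldsymbol{\xi}_a$.

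Part (iii) rests on Lemma~\ref{lem:closed}. Take $a\neq b$ with separation at least $\Delta$. The integrand $\bp^{\mathsf T}\bG(\bar{\boldsymbol{\tau}}+\bar{\bs}-\bar{\bs}')\bp$ is smooth on the product of the two supports, and the delta term vanishes off the diagonal. A Taylor expansion about $\bu_a-\bu_b$ has a vanishing first-order term because $\xi$ is even, so the entry equals $\ell^2\zeta_\bp(\bu_a-\bu_b)(1+O((\kappa\Delta)^2))$. Relative uniformity requires the separation to stay away from $0$, or else one uses the $R^{-3}$ derivative bounds. We then evaluate \eqref{eq:hankelform} with $\bp^{\mathsf T}\bI_3\bp=1$ and $\bp^{\mathsf T}\hat{\boldsymbol{\tau}}\hat{\boldsymbol{\tau}}^{\mathsf T}\bp=c^2$. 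This gives $-\im\kappa Z_0\cdot\tfrac{\im\kappa}{4\pi}[\cdots]\ell^2=\tfrac{Z_0\kappa^2\ell^2}{4\pi}[\cdots]=R_r\cdot\tfrac32[\cdots]$, which is \eqref{eq:vartheta}. The $j_\ell$ and $y_\ell$ families match \eqref{eq:imG} and \eqref{eq:reG}.

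For the radiation self-resistance we use that $\Imag\{\bG\}$ is entire, with $\Imag\{\bG(\mathbf 0)\}=\tfrac{\kappa}{6\pi}\bI_3$. Then $\kappa Z_0\ell^2\cdot\tfrac{\kappa}{6\pi}=R_r$, up to $O((\kappa\Delta)^2)$ from the even Taylor remainder.

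Part (iv) is where I expect the real work. The ohmic term is exactly $Z_s\int\xi^2$. Under the scaling $\xi=\Delta^{-2}\ell\,\phi(\bar{\bs}/\Delta)$ it equals $Z_s\ell^2\Delta^{-2}\int\phi^2$. To establish the divergence of $X_A$:
\begin{itemize}
\item We first note that it is finite for fixed $\Delta$. The $R^{-3}$ singularity of $\Real\{\bG\}$ is not locally integrable on a 2D domain, so the plain integral does not exist. Instead, we write $\Real\{\bG\}$ as $\nabla\nabla^{\mathsf T}$ acting on a weakly singular kernel and integrate by parts onto $\xi$. This gives a finite form $\tfrac{Z_0}{4\pi\kappa}\iint(\kappa^2\xi\xi'-\partial_\bp\xi\,\partial_\bp\xi')\cos(\kappa R)/R$, valid for $\xi$ in $H^{1/2}$-type classes. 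Any profile with nonzero tangential charge $\partial_\bp\xi$ makes this term dominate.
\item We then apply the scaling. The dominant contribution scales as $\ell^2\Delta^{-3}/\kappa$, since the derivatives give $\Delta^{-2}$ and $1/R$ over the region gives $\Delta^{-1}$. This yields $X_A\sim-(\kappa\Delta)^{-3}\cdot\mathrm{const}\cdot R_r$, with a profile-dependent constant and a capacitive sign.
\end{itemize}
These diagonal terms therefore cannot have a universal point limit. Collecting the diagonal from (iii) and (iv) and the mutual terms from (iii) gives the final decomposition with its $O((\kappa\Delta)^2)$ error.
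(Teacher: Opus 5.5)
Parts (i)--(iii) follow the paper's appendix essentially step for step. This covers the translation to the correlation form, uniqueness by separating the two real-symmetric Hermitian forms, the even-profile Taylor expansion for the mutual entries, and the entire $\Imag\{\bG\}$ for the Hertzian self-resistance.

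The genuine difference is in (iv). The paper argues directly from the $\|\bar{\boldsymbol{\tau}}\|^{-3}$ growth of $\bp^{\mathsf H}\Real\{\bG\}\bp$. It asserts that the self-correlation is finite for $\Delta>0$ and reads off the $(\kappa\Delta)^{-3}$ rate by homogeneity. It bounds the ohmic term by Cauchy--Schwarz, $Z_s\int\xi^2\ge Z_s\ell^2/(\pi\Delta^2/4)$.

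You observe, correctly, that this self-correlation does not exist as an ordinary integral over a planar support. It does not exist as a principal value either: the in-plane angular average of the $x^{-3}$ coefficient $1-3c^2$ is $-\tfrac12$, not $0$, so the 3D depolarization extraction invoked in the paper's footnote does not transfer. You give the self-correlation meaning by moving $\nabla\nabla^{\mathsf T}$ onto the profiles, which yields the mixed-potential form with the weakly singular kernel $\cos(\kappa R)/R$.

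This buys two things. First, an actual proof of finiteness under the right hypothesis, $\xi\in H^{1/2}$. That regularity is also why the edge-truncated modes of Section~\ref{sec:diag} have divergent reactive forms, so the statement's ``every fixed profile'' really needs it. Second, a strictly positive leading constant, since $\iint f(\bar{\bs})f(\bar{\bs}')\|\bar{\bs}-\bar{\bs}'\|^{-1}\dd\bar{\bs}\,\dd\bar{\bs}'>0$ for nonzero real $f$ and $\partial_{\bp}\xi\not\equiv0$. You therefore get genuine divergence, not just a scaling. The paper's route is shorter, and its Cauchy--Schwarz bound on $R_\Omega$ covers every profile of diameter $\Delta$. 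Your scaling identity is exact but limited to self-similar families.

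There is one sign slip. Since $\kerX=-\kappa Z_0\Real\{\bG\}$, your mixed-potential expression needs an overall minus sign. The leading term is then $X_A\approx+\frac{Z_0}{4\pi\kappa}\iint\partial_{\bp}\xi\,\partial_{\bp}\xi'\,R^{-1}>0$. Under the paper's $e^{-\im2\pi ft}$ convention, where $Q=2\omega(W_e-W_m)$, positive reactance is capacitive. This matches $\widehat{x}_{\bp}>0$ for $|\bk\cdot\bar{\bp}|\gg\kappa$ in Proposition~\ref{prop:split}. Your ``$-(\kappa\Delta)^{-3}$, capacitive'' uses the opposite time convention. The divergence rate, which is all (iv) asserts, is unaffected.
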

\begin{IEEEproof}
Please refer to Appendix~\ref{app:equiv}.
\end{IEEEproof}

The two cultures are thus one bookkeeping at two resolutions: off the diagonal the circuit matrix is the field kernel sampled at the port separations, with resistance and reactance inheriting \eqref{eq:imG}-\eqref{eq:reG} through the single function \eqref{eq:vartheta}, and $\Real\{\bZ(\Utt)\}$ recovering the discrete coupling matrix of \cite[Sec.~V]{WangCAPA2025}, while on the diagonal field theory proves that no universal value exists, the element constants of the circuit tradition (e.g., $73.08+\im42.21\,\Omega$ \cite{New2024MIMOFAS}) being the missing profile integrals.

\begin{remark}[Attaching generators]\label{rem:cultures}
Throughout, the ports are current-driven and $\boldsymbol{\iota}$ is the design variable. A feed with source voltages $\mathbf{v}_s$ and load $Z_L$ merely re-parametrizes it, $\boldsymbol{\iota}=(\bZ(\Utt)+Z_L\bI_A)^{-1}\mathbf{v}_s$: the fixed-load FAS model of \cite[App.~III]{New2024MIMOFAS} is exactly this map, a corollary of Theorem~\ref{thm:equiv}, its drive-side cost captured by the voltage constraint of Section~\ref{sec:constraints}.
\end{remark}

\begin{remark}[Scalar reduction and the $\lambda/2$ intuition]\label{rem:sinc}
Retaining only the scalar Green function collapses $\rho$ to $\rho_0(x)=j_0(x)=\sinc(x)$, with zeros at multiples of $\lambda/2$: the precise sense in which classical MIMO sees no coupling on that grid. The polarized kernel \eqref{eq:vartheta} shifts these nulls, quantified in Section~\ref{sec:fasequiv} through the null equations of \cite{WangCAPA2025}.
\end{remark}

\subsection{Multi-user signal and coupled constraints}\label{sec:constraints}
The multi-user transmit signal superposes $K$ precoded streams on the $A$ ports, $\boldsymbol{\iota}=\sum_{k=1}^{K}\bw_k x_k$ with independent zero-mean symbols $x_k$ of unit average power and precoders $\bw_k\in\mathbb{C}^A$ in feed-current units, the complex-baseband envelopes of the analog port currents, with one RF chain per active port. Normalizing $\bZ(\Utt)$ by $R_r$ and writing $\varepsilon\triangleq R_\Omega/R_r$ (the loss-to-radiation ratio. The element efficiency is $1/(1+\varepsilon)$) and $\zeta_A\triangleq X_A/R_r$, Theorem~\ref{thm:equiv} gives
\begin{align}
\bZ(\Utt)&=R_r\Big[\big(1+\varepsilon+\im\zeta_A\big)\bI_A+\boldsymbol{\Theta}(\Utt)\Big],\nonumber\\
[\boldsymbol{\Theta}]_{ab}&=\begin{cases}\vartheta(\bu_a-\bu_b), & a\ne b,\\ 0, & a=b.\end{cases}\label{eq:Znorm}
\end{align}
Two constraints follow from Section~\ref{sec:coupling}. The average-power constraint involves only the resistive part: with $\bC_\varepsilon(\Utt)\triangleq \varepsilon\bI_A+\bC(\Utt)$ and $[\bC(\Utt)]_{ab}=\rho(\bu_a-\bu_b)$ (so $[\bC]_{aa}=\rho(\mathbf{0})=1$: the unit radiation self-resistance sits on the diagonal of $\bC$, and $\bC_\varepsilon$ carries $1+\varepsilon=(R_\Omega+R_r)/R_r$ there, matching the real diagonal of \eqref{eq:Znorm}),
\begin{equation}\label{eq:powcon}
\sum_{k=1}^{K}\mathbb{E}\big\{P_{\rm em}\big\}=\frac{R_r}{2}\sum_{k=1}^{K}\bw_k^{\mathsf H}\,\bC_\varepsilon(\Utt)\,\bw_k\le P.
\end{equation}
The voltage-feasibility constraint involves the full matrix \eqref{eq:Znorm}: the port voltages are $\mathbf{v}=\bZ(\Utt)\boldsymbol{\iota}$, so, per port $a$ and with $\mathbf{e}_a$ the $a$-th canonical vector,
\begin{equation}\label{eq:voltcon}
\mathbb{E}\big\{|v_a|^2\big\}=\sum_{k=1}^{K}\big|\mathbf{e}_a^{\mathsf T}\bZ(\Utt)\bw_k\big|^2\le V_{\max}^2,\qquad a=1,\dots,A,
\end{equation}
which is where the reactive kernel, dominant at small spacings by \eqref{eq:reG}, bites. Equations \eqref{eq:rx}, \eqref{eq:powcon}, and \eqref{eq:voltcon} constitute the coupling-consistent multi-user model. Note where the coupling lives: not in the channel map \eqref{eq:rx}, which is linear in the currents, but in the geometry of the feasible set. Two configurations that would be equivalent under a norm constraint may differ vastly in radiated power and required drive voltages. Conversely, a configuration may radiate far less than its norm suggests, which is precisely the loophole superdirectivity exploits, now visibly taxed by $\varepsilon$ and $V_{\max}$.

The model of this section is complete but spatial. Its structure, a band-limited channel functional \eqref{eq:rx} and shift-invariant kernels \eqref{eq:RXsplit}, begs for a Fourier treatment, and the next section provides it.

\section{Wavenumber-Domain Representation and the FAS-HMIMO Equivalence}\label{sec:wavenumber}
This section carries the model of Section~\ref{sec:model} into the wavenumber domain, where the split \eqref{eq:RXsplit} becomes an exact statement about spectral supports, and then shows that the flexible-position antenna lives there as a simple object: a constant-modulus codeword on the wavenumber lattice.

\subsection{The light circle splits the impedance kernel}\label{sec:split}
For tangential currents on the aperture plane, $\bj(\bs)=\bp\,j(\bar{\bs})$ with the fixed real $\bp\perp\hat{\mathbf{z}}$ of Section~\ref{sec:ports}, the relevant restriction of the impedance kernel is the scalar function $\zeta_\bp(\bar{\boldsymbol{\tau}})$ of Theorem~\ref{thm:equiv}. Its planar Fourier transform is where radiation and reaction part ways.

\begin{proposition}[Light-circle dichotomy]\label{prop:split}
In the sense of tempered distributions, $\zeta_\bp(\bar{\boldsymbol{\tau}})=\tfrac{1}{4\pi^2}\int_{\mathbb{R}^2}\widehat{\zeta}_\bp(\bk)e^{\im\bk\cdot\bar{\boldsymbol{\tau}}}\dd\bk$ with
\begin{equation}\label{eq:zspec}
\widehat{\zeta}_\bp(\bk)=Z_s+\kappa Z_0\,\frac{1-(\bk\cdot\bar{\bp})^2/\kappa^2}{2\,k_z},\qquad k_z=\sqrt{\kappa^2-\|\bk\|^2},
\end{equation}
where the square root carries $\Imag\{k_z\}\ge0$. Consequently, the resistive and reactive spectra separate exactly at the light circle $\|\bk\|=\kappa$:
\begin{align}
\widehat{r}_\bp(\bk)&\triangleq\Real\{\widehat{\zeta}_\bp(\bk)\}\!=\!Z_s\!+\!\kappa Z_0\,\frac{1-(\bk\cdot\bar{\bp})^2/\kappa^2}{2k_z}\,\mathsf{1}\{\|\bk\|<\kappa\},\label{eq:rspec}\\
\widehat{x}_\bp(\bk)&\triangleq\Imag\{\widehat{\zeta}_\bp(\bk)\}\!=\!-\kappa Z_0\,\frac{1-(\bk\cdot\bar{\bp})^2/\kappa^2}{2|k_z|}\,\mathsf{1}\{\|\bk\|>\kappa\}.\label{eq:xspec}
\end{align}
The resistive spectrum is the ohmic floor $Z_s$ plus the obliquity-and-polarization weight on the visible disc, in agreement with the CAPA coupling kernel of \cite[Eq.~(33)]{WangCAPA2025}. The reactive spectrum lives entirely in the evanescent region, changes sign across the polarization cone $|\bk\cdot\bar{\bp}|=\kappa$ (capacitive versus inductive storage), and grows linearly in $\|\bk\|$, mirroring the $R^{-3}$ singularity of $\kerX$.
\end{proposition}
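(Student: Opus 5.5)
The plan is to derive \eqref{eq:zspec} from the Weyl (Sommerfeld) plane-wave expansion of the scalar Green function, and then read off the real/imaginary split from where $k_z$ is real or purely imaginary. I would start from
\begin{equation*}
g(R)=\frac{e^{\im\kappa R}}{4\pi R}=\frac{\im}{8\pi^2}\int_{\mathbb{R}^2}\frac{e^{\im\bk\cdot\bar{\boldsymbol{\tau}}+\im k_z|z|}}{k_z}\,\dd\bk,
\end{equation*}
valid with $\Imag\{k_z\}\ge0$. At $z=0$ this says that the planar Fourier transform of $g(\|\bar{\boldsymbol{\tau}}\|)$ is $\im/(2k_z)$. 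The integrand has an integrable singularity $|\kappa^2-\|\bk\|^2|^{-1/2}$ at the light circle and decays like $\|\bk\|^{-1}$ at infinity. It therefore defines a tempered distribution, and I would justify the $z\to0^+$ limit weakly, by pairing with Schwartz test functions and using dominated convergence.

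Next I apply the differential operator in \eqref{eq:green}. Since $\bp\perp\hat{\mathbf z}$ is real, $\bp^{\mathsf T}\nabla\nabla^{\mathsf T}\bp=(\bar{\bp}\cdot\nabla_{\bar{\boldsymbol{\tau}}})^2$ involves only planar derivatives. Each planar derivative acts as multiplication by $\im\bk$ in the Fourier domain, and this holds distributionally with no extra care. This gives
\begin{equation*}
\widehat{\bp^{\mathsf T}\bG\bp}(\bk)=\big(1-(\bk\cdot\bar{\bp})^2/\kappa^2\big)\frac{\im}{2k_z}.
\end{equation*}
Multiplying by $-\im\kappa Z_0$ turns this into $\kappa Z_0(1-(\bk\cdot\bar\bp)^2/\kappa^2)/(2k_z)$. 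The transform of $Z_s\delta$ is $Z_s$, so adding it yields \eqref{eq:zspec}.

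The split \eqref{eq:rspec}--\eqref{eq:xspec} is then pointwise algebra:
\begin{itemize}
\item For $\|\bk\|<\kappa$, $k_z>0$ is real, so the whole weight is real.
\item For $\|\bk\|>\kappa$, $k_z=\im|k_z|$, and $1/(2k_z)=-\im/(2|k_z|)$ is purely imaginary.
\end{itemize}
To connect with \eqref{eq:RXsplit}, I would note that because the planar kernels are real and even, $\Real\{\widehat{\zeta}_\bp\}$ is the transform of the real part of $\zeta_\bp$, namely $\kerR$. This follows since the transform of an even real function is real, and the transforms of $\Real\{\bG\}$ and $\Imag\{\bG\}$ are each real. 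As a cross-check, the visible-disc part should be consistent with \eqref{eq:planewave}: the plane-wave representation restricted to $z=0$ pushes the sphere measure onto the disc with Jacobian $\kappa/k_z$, which reproduces the $1/(2k_z)$ weight.

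The remaining qualitative claims are direct:
\begin{itemize}
\item The sign of $\widehat x_\bp$ is the sign of $(\bk\cdot\bar\bp)^2-\kappa^2$, so it flips across $|\bk\cdot\bar\bp|=\kappa$.
\item For large $\|\bk\|$, $|k_z|\sim\|\bk\|$, so $\widehat x_\bp\sim\|\bk\|\cos^2\phi\cdot Z_0/(2\kappa)$, i.e.\ linear growth. This is consistent with a homogeneous $R^{-3}$ singularity in 2D, whose transform is of degree $+1$.
\end{itemize}

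I expect the main obstacle to be rigor at the boundary, not the algebra. Two points need care. First, the dyadic $\nabla\nabla^{\mathsf T}$ produces a non-integrable $R^{-3}$ term, so $\zeta_\bp$ is not locally integrable and must be read as a distribution: it is the principal value plus the depolarization delta noted after Lemma~\ref{lem:closed}. My plan is to avoid this entirely by differentiating $g$ distributionally, since $g$ is locally integrable in 2D, and transforming first. The delta contribution is then automatically contained in the symbol $-(\bk\cdot\bar\bp)^2/\kappa^2\cdot \im/(2k_z)$. Second, the weak limit $z\to0^+$ at the light circle should be handled by splitting a neighbourhood of $\|\bk\|=\kappa$ and using the integrability of $|k_z|^{-1}$ there.
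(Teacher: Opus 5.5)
Your proposal is correct and follows essentially the paper's own route: the Weyl expansion, the reduction of $(\bI_3+\kappa^{-2}\nabla\nabla^{\mathsf T})$ on tangential $\bp$ to the symbol $1-(\bk\cdot\bar{\bp})^2/\kappa^2$, multiplication by $-\im\kappa Z_0$, addition of the flat $Z_s$, and the pointwise split according to whether $k_z$ is real or purely imaginary. Your ordering (taking $z\to0^+$ on the locally integrable scalar $g$ before differentiating in-plane distributionally) is slightly cleaner than the paper's, since it never meets the $z$-mixed blocks, and your real-and-even argument makes explicit why $\Real\{\widehat{\zeta}_\bp\}$ is the transform of $r_\bp$; one small correction is that in-plane the $(3c^2-1)R^{-3}$ term has nonzero circular mean, so your route produces a Hadamard finite part rather than a principal value plus a depolarization delta, but your method handles this automatically.
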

\begin{IEEEproof}
Please refer to Appendix~\ref{app:spectral}.
\end{IEEEproof}

Proposition~\ref{prop:split} upgrades the informal statement that evanescent waves do not radiate to an exact operator identity, and makes the average power computable by inspection. Since average power is all that the transmit budget prices, attention restricts to the resistive part of the kernel: define the scalar resistive kernel
\begin{equation}\label{eq:rpdef}
r_\bp(\bar{\boldsymbol{\tau}})\triangleq\Real\{\zeta_\bp(\bar{\boldsymbol{\tau}})\}=Z_s\,\delta(\bar{\boldsymbol{\tau}})+\underbrace{\kappa Z_0\,\bp^{\mathsf H}\Imag\{\bG(\bar{\boldsymbol{\tau}})\}\bp}_{\triangleq\,r^{\rm rad}_\bp(\bar{\boldsymbol{\tau}})},
\end{equation}
whose transform is \eqref{eq:rspec}. Substituting this spectral representation into the quadratic form, recognizing $\int j^{*}(\bar{\bs})e^{\im\bk\cdot\bar{\bs}}\dd\bar{\bs}=\tilde{j}^{*}(\bk)$, and exchanging the integrals yields the Parseval identity
\begin{equation}\label{eq:parseval}
\iint j^{*}(\bar{\bs})\,r_\bp(\bar{\bs}-\bar{\bs}')\,j(\bar{\bs}')\,\dd\bar{\bs}\,\dd\bar{\bs}'
=\frac{1}{4\pi^2}\int_{\mathbb{R}^2}\widehat{r}_\bp(\bk)\,|\tilde{j}(\bk)|^2\,\dd\bk.
\end{equation}
the exchange is legitimate because $\tilde j$ has finite energy and $\widehat{r}_\bp$ is integrable on its support, the rim divergence being harmless since $\int_{\|\bk\|<\kappa}k_z^{-1}\dd\bk=2\pi\kappa$. Halving \eqref{eq:parseval},
\begin{align}
P_{\rm em}&=\frac{1}{8\pi^2}\int_{\mathbb{R}^2}\widehat{r}_\bp(\bk)\,|\tilde{j}(\bk)|^2\,\dd\bk=\frac{Z_s}{8\pi^2}\int_{\mathbb{R}^2}|\tilde{j}(\bk)|^2\,\dd\bk\nonumber\\
&+\frac{\kappa Z_0}{16\pi^2}\int_{\|\bk\|<\kappa}\frac{1-(\bk\cdot\bar{\bp})^2/\kappa^2}{k_z}\,|\tilde{j}(\bk)|^2\,\dd\bk,\label{eq:powerdisc}
\end{align}
the first term returning the ohmic $\tfrac{Z_s}{2}\int\|j\|^2$ by Parseval: the ohmic floor taxes every spectral component, the radiation weight only the visible disc. Coupling has not disappeared, it is the non-flatness of the per-wavenumber power price in \eqref{eq:powerdisc}.

\subsection{Fourier modes and the modal Gram}\label{sec:diag}
Expand the scalar current over the orthonormal Fourier modes of the surface,
\begin{equation}\label{eq:modes}
j(\bar{\bs})=\sum_{n}q_n\,\phi_n(\bar{\bs}),\qquad \phi_n(\bar{\bs})=\frac{e^{\im\bk_n\cdot\bar{\bs}}}{\sqrt{|\surf|}},
\end{equation}
with lateral wavenumbers on the aperture lattice, $\bk_n=2\pi[n_x/S_x,\,n_y/S_y]^{\mathsf T}$, $\mathbf{n}\in\mathbb{Z}^2$, and $\int_\surf\phi_n^{*}\phi_{n'}\dd\bar{\bs}=\delta_{nn'}$. The complex power becomes $P_{\rm cx}=\tfrac12\,\bq^{\mathsf H}(\bB_R+\im\bB_X)\,\bq$ with modal Grams $[\bB_R]_{nm}=\langle\phi_n\bp,\kerR\,\phi_m\bp\rangle$ and likewise for $\bB_X$. The ohmic part of $\bB_R$ is computed exactly by orthonormality: $Z_s\int\phi_n^{*}\phi_m=Z_s\delta_{nm}$. For the radiative part, insert the spectral representation of $r^{\rm rad}_\bp$ from \eqref{eq:rpdef} and factorize the exponential of the difference $\bar{\bs}-\bar{\bs}'$:
\begin{align}
&[\bB_R]_{nm}\!-\!Z_s\delta_{nm}
\!=\!\frac{1}{|\surf|}\!\iint_{\surf^2}\!\!e^{-\im\bk_n\cdot\bar{\bs}}\,r^{\rm rad}_\bp(\bar{\bs}-\bar{\bs}')\,e^{\im\bk_m\cdot\bar{\bs}'}\!\dd\bar{\bs}\,\dd\bar{\bs}'\nonumber\\
&=\frac{1}{4\pi^2|\surf|}\int_{\|\bk\|<\kappa}\widehat{r}^{\,\rm rad}_\bp(\bk)\,D(\bk-\bk_n)\,D(\bk-\bk_m)\,\dd\bk,\label{eq:Bexact}
\end{align}
where the two aperture integrals produced the real, even Dirichlet kernel $D(\bk)\triangleq\int_\surf e^{\im\bk\cdot\bar{\bs}}\dd\bar{\bs}=S_xS_y\,\sinc(k_xS_x/2)\,\sinc(k_yS_y/2)$. Expression \eqref{eq:Bexact} is exact for any aperture size. Its asymptotics are the content of the next result, and the accuracy at finite aperture is quantified numerically in Section~\ref{sec:numerics}.

\begin{lemma}[Asymptotic diagonalization of the resistive Gram]\label{lem:diag}
Fix $\delta>0$ and lattice modes with $\|\bk_n\|,\|\bk_m\|\le\kappa-\delta$. Then, as $\kappa S_x,\kappa S_y\to\infty$,
\begin{equation}\label{eq:Bdiag}
[\bB_R]_{nm}=\widehat{r}_\bp(\bk_n)\,\delta_{nm}+\varepsilon_{nm},\qquad \varepsilon_{nm}\to 0,
\end{equation}
with $|\varepsilon_{nm}|\le \omega\big(S_{\min}^{-1/2}\big)+O\big(S_{\min}^{-q}\big)$ for every $q<1/2$, where $\omega$ is the modulus of continuity of $\widehat{r}_\bp$ on the disc of radius $\kappa-\delta/2$ and $S_{\min}=\min(S_x,S_y)$. Evanescent modes, $\|\bk_n\|\ge\kappa+\delta$, satisfy $[\bB_R]_{nn}\to Z_s$: they dissipate but do not radiate.
\end{lemma}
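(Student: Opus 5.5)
The plan is to work directly from the exact representation \eqref{eq:Bexact} and to treat the normalized product $\frac{1}{4\pi^2|\surf|}D(\bk-\bk_n)D(\bk-\bk_m)$ as an approximate identity in $\bk$.

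The first step is to record the properties of the Dirichlet kernel that drive the argument. By Plancherel, $\frac{1}{4\pi^2}\int_{\mathbb{R}^2}D(\bk-\bk_n)D(\bk-\bk_m)\,\dd\bk=\int_\surf e^{\im(\bk_m-\bk_n)\cdot\bar{\bs}}\dd\bar{\bs}=|\surf|\delta_{nm}$, using lattice orthogonality. The normalized product therefore integrates to $\delta_{nm}$ over all of $\mathbb{R}^2$. Its mass concentrates within distance $O(S_{\min}^{-1})$ of $\bk_n$ and $\bk_m$. The tail bound I would use is $|D(\bk)|\le \min(S_x,2/|k_x|)\min(S_y,2/|k_y|)$.

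The second step handles visible modes. Split the $\bk$-integral into a ball $\|\bk-\bk_n\|\le\eta$ with $\eta=S_{\min}^{-1/2}$, and its complement. The radius is chosen so that $\eta\to0$ while $\eta S_{\min}\to\infty$.
\begin{itemize}
\item Inside the ball, which stays within the disc of radius $\kappa-\delta/2$ once $\eta<\delta/2$, write $\widehat r^{\,\rm rad}_\bp(\bk)=\widehat r^{\,\rm rad}_\bp(\bk_n)+O(\omega(\eta))$. The constant part produces $\widehat r^{\,\rm rad}_\bp(\bk_n)\delta_{nm}$ minus the tail mass of the product kernel outside the ball. The fluctuation is bounded by $\omega(\eta)$ times $\frac{1}{4\pi^2|\surf|}\int|D||D|$.
\item This last factor is the delicate point. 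The $L^1$ norm of the 1D sinc product grows like $\log S$, which would spoil a clean $\omega(\eta)$ bound. I would absorb it either with Cauchy--Schwarz, giving a bound of $1$ via $\frac{1}{4\pi^2|\surf|}\int|D|^2=1$, or by accepting a loss that is hidden in the $O(S_{\min}^{-q})$ term. Cauchy--Schwarz gives exactly $\le1$, so the fluctuation is at most $\omega(\eta)$.
\item Outside the ball, the kernel $\widehat r^{\,\rm rad}_\bp$ is bounded away from the rim and only integrably singular at the rim, with $\int k_z^{-1}=2\pi\kappa$. The factor $|D(\bk-\bk_n)|$ is bounded by about $1/(\eta\cdot\text{one side})$ per axis, while the $D(\bk-\bk_m)$ factor is controlled in $L^2$.
\item Combining these yields tail contributions of order $(\eta S_{\min})^{-1}$ up to logarithmic factors, i.e.\ $S_{\min}^{-1/2}\log S_{\min}$. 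This is $O(S_{\min}^{-q})$ for every $q<1/2$, and it explains the exponent in the statement.
\end{itemize}
Adding $Z_s\delta_{nm}$ then gives \eqref{eq:Bdiag}, since $\widehat r_\bp=Z_s+\widehat r^{\,\rm rad}_\bp$ on the disc.

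I expect the main obstacle to be the rim singularity $k_z^{-1/2}\cdot$ obliquity near $\|\bk\|=\kappa$ interacting with the slowly decaying sinc tails, where the weight is not bounded. I would handle it with Hölder rather than Cauchy--Schwarz. The weight $k_z^{-1}$ lies in $L^p$ of the disc for $p<2$, since it behaves like $(\kappa-\|\bk\|)^{-1/2}$. The kernel product is estimated in $L^{p'}$ on the region at distance $\ge\delta/2$ from $\bk_n$. This step is what costs the arbitrary $q<1/2$ rather than $q=1/2$.

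The third step treats evanescent modes with $\|\bk_n\|\ge\kappa+\delta$ and $m=n$. Here the entire integration region $\|\bk\|<\kappa$ lies at distance $\ge\delta$ from $\bk_n$. The same tail and Hölder estimate gives $\frac{1}{4\pi^2|\surf|}\int_{\|\bk\|<\kappa}\widehat r^{\,\rm rad}_\bp|D(\bk-\bk_n)|^2\dd\bk\to0$, because $|D|^2/|\surf|\lesssim 1/(\delta^2|\surf|)\cdot S_xS_y/(S_xS_y)\to0$ pointwise with integrable domination. Hence $[\bB_R]_{nn}\to Z_s$.
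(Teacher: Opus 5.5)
Your plan follows the paper's proof: Plancherel orthogonality of the shifted Dirichlet kernels, a ball of radius $S_{\min}^{-1/2}$ handled by the modulus of continuity, sinc-tail bounds off the ball, H\"older with $p\in(1,2)$ on the rim ring as the source of $q<1/2$, and the same estimates for evanescent modes; your Cauchy--Schwarz step replaces the paper's split $|D(\bk-\bk_n)D(\bk-\bk_m)|\le\tfrac12D(\bk-\bk_n)^2+\tfrac12D(\bk-\bk_m)^2$, which is harmless and is actually preferable for well-separated $m\neq n$, since that split leaves a term tending to $\tfrac12|\widehat r_\bp(\bk_m)-\widehat r_\bp(\bk_n)|$. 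One inequality needs correcting: at distance $\ge\delta$ from $\bk_n$ you only have $|D(\bk-\bk_n)|^2/|\surf|=O\big(\max(S_x,S_y)/(\delta^2S_{\min})\big)$, because the kernel does not decay along the axis-parallel lines through $\bk_n$, so your dominated-convergence step for evanescent modes rests on convergence to zero off those null lines plus a uniform bound at bounded aspect ratio, an assumption the paper's H\"older rim estimate also makes tacitly.
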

\begin{IEEEproof}
Please refer to Appendix~\ref{app:lemdiag}.
\end{IEEEproof}

Fourier modes thus asymptotically decouple the aperture, each propagating mode radiating at the individual price $\widehat{r}_\bp(\bk_n)$: the ohmic floor plus a weight minimal near broadside and integrably divergent toward the rim. The reactive Gram $\bB_X$ behaves differently, and \eqref{eq:xspec} says why: $\widehat{x}_\bp$ grows linearly in $\|\bk\|$ while the transforms of the truncated exponentials \eqref{eq:modes} decay only as $1/k$ per axis, so the form $\langle\phi_n\bp,\kerX\phi_n\bp\rangle$ diverges logarithmically. The divergence belongs to the modes rather than to the framework: a Fourier current chopped at the aperture edge carries a line discontinuity, to which the electromagnetic edge condition assigns infinite stored energy. Reactive bookkeeping is accordingly carried at port level, where the profiles are smooth, consistently with Theorem~\ref{thm:equiv}(iv). A caution on terminology: the limit $\kappa S\to\infty$ concerns the aperture relative to the wavelength and involves no user, so no far-field assumption is made anywhere in the paper. Practical users in fact sit inside the Rayleigh distance $2D^2/\lambda$ of an electrically large surface, and the resulting wavefront curvature lives entirely in the spectra $\widehat{g}_{k,n}$, which spread over many visible modes while $\bB_R$ diagonalizes regardless.

\begin{remark}[Edge-compatible modal families]\label{rem:edge}
The truncated exponentials \eqref{eq:modes} are adopted for their translation covariance, exploited in Section~\ref{sec:fasequiv}, but any orthonormal family vanishing at the rim, such as TE/TM-type sinusoidal products on the refined lattice $\pi/S_a$, has transforms decaying as $1/k^2$ per axis and hence finite reactive forms. The radiative analysis transfers by congruence with the same weights (Lemma~\ref{lem:diag}), so edge-compatible families admit wavenumber-domain reactive bookkeeping as well.
\end{remark}

\subsection{Flexible ports as constant-modulus wavenumber codewords}\label{sec:fasequiv}
We can now make precise the folklore statement that ``FAS is HMIMO with antenna selection'' \cite{New2025Oversampling}, and, crucially, make it coupling-aware. Let the antenna consist of the $A$ simultaneously active ports of Section~\ref{sec:ports}: common real profile $\xi$, polarization $\bp$, feed currents $\iota_a$, planar positions $\bu_a\in\surf$,
\begin{equation}\label{eq:fascurrent}
\bj(\bs)=\bp\sum_{a=1}^{A}\iota_a\,\xi(\bar{\bs}-\bu_a).
\end{equation}
Projecting \eqref{eq:fascurrent} on \eqref{eq:modes}, the scalar-current coefficient on mode $n$ follows by orthonormality and a change of variables:
\begin{align}
q_n&=\int_\surf \phi_n^{*}(\bar{\bs})\,\sum_{a=1}^{A}\iota_a\,\xi(\bar{\bs}-\bu_a)\,\dd\bar{\bs}\nonumber\\
&=\frac{1}{\sqrt{|\surf|}}\sum_{a=1}^{A}\iota_a\int_{\mathbb{R}^2} \xi(\bar{\bs}-\bu_a)\,e^{-\im\bk_n\cdot\bar{\bs}}\,\dd\bar{\bs}\nonumber\\
&=\frac{1}{\sqrt{|\surf|}}\sum_{a=1}^{A}\iota_a\,e^{-\im\bk_n\cdot\bu_a}\,\widehat{\xi}(\bk_n)
=\sum_{a=1}^{A}\iota_a\,\widehat{\xi}(\bk_n)\,[\bm{\varphi}(\bu_a)]_n,\label{eq:codeword}
\end{align}
where the second line inserts $\phi_n^{*}=e^{-\im\bk_n\cdot\bar{\bs}}/\sqrt{|\surf|}$ and extends the integration to the plane (each element footprint lies inside $\surf$), and the third substitutes $\bar{\mathbf{v}}=\bar{\bs}-\bu_a$, with $\widehat{\xi}(\bk)\triangleq\int\xi(\bar{\mathbf{v}})e^{-\im\bk\cdot\bar{\mathbf{v}}}\dd\bar{\mathbf{v}}$ real and even, equal to $\ell$ at the origin and flat over the visible disc when $\kappa\Delta\ll1$. Every port thus contributes the fixed taper $\widehat{\xi}(\bk_n)$ times the constant-modulus codeword $\bm{\varphi}(\bu)\triangleq|\surf|^{-1/2}[e^{-\im\bk_n\cdot\bu}]_{n}$. Entrywise, $[\bm{\varphi}(\bu)]_n=\phi_n^{*}(\bu)$, the conjugate sample of the $n$-th mode at the port location.

\begin{theorem}[Positioning-HMIMO equivalence]\label{thm:fas}
The $A$-port flexible-position antenna over $\surf$ is the HMIMO transmitter of Section~\ref{sec:diag} with mode coefficients confined to the manifold
\begin{equation}\label{eq:manifold}
\mathcal{Q}_A=\Big\{\sum_{a=1}^{A}\iota_a\,\widehat{\xi}\odot\bm{\varphi}(\bu_a)\;:\;\iota_a\in\mathbb{C},\;\bu_a\in\surf\Big\},
\end{equation}
with $\odot$ the entrywise product by the taper, and under this embedding:

(i) the received functional of user $k$ superposes codeword evaluations of the Fourier series of its tapered channel,
\begin{align}
\int_\surf\mathbf{g}_k^{\mathsf H}(\bs)\,\bj(\bs)\,\dd\bs&=\sum_{a=1}^{A}\iota_a\,h_k(\bu_a),\nonumber\\
h_k(\bu)&=\sum_n \widehat{g}_{k,n}\,\widehat{\xi}(\bk_n)\,[\bm{\varphi}(\bu)]_n,\label{eq:chanseries}
\end{align}
with $\widehat{g}_{k,n}=\int_\surf \mathbf{g}_k^{\mathsf H}(\bs)\bp\,\phi_n(\bar{\bs})\,\dd\bar{\bs}$: port motion is a spectral modulation, and the FAS skimming of the field of \cite{New2025Oversampling} is codeword evaluation.

(ii) the complex power computed in the mode domain coincides, for every aperture, with the port form of Theorem~\ref{thm:equiv},
\begin{equation}\label{eq:modalexact}
\tfrac12\,\bq^{\mathsf H}\big(\bB_R+\im\bB_X\big)\,\bq=\tfrac12\,\boldsymbol{\iota}^{\mathsf H}\bZ(\Utt)\,\boldsymbol{\iota},
\end{equation}
and its resistive part admits, by Lemma~\ref{lem:diag}, the diagonal-limit representation
\begin{align}
P_{\rm em}\;&\to\;\frac12\sum_{a=1}^{A}\sum_{b=1}^{A}\iota_a^{*}\,\iota_b\,\Psi(\bu_a-\bu_b),\nonumber\\
\Psi(\bar{\boldsymbol{\tau}})\!&\triangleq\!\!\!\lim_{\kappa S\to\infty}\frac{1}{|\surf|}\sum_{n}\widehat{r}_\bp(\bk_n)\,\widehat{\xi}(\bk_n)^2\,e^{\im\bk_n\cdot\bar{\boldsymbol{\tau}}}\!=\!(\xi\star r_\bp\star\xi)(\bar{\boldsymbol{\tau}}),\label{eq:pullback}
\end{align}
the lattice sum being the Riemann sum of the spectral integral \eqref{eq:parseval}: the coupling between ports is the pullback of the wavenumber kernel under the codeword map \eqref{eq:codeword}.
\end{theorem}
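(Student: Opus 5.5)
The plan is to treat the statement as three checks, all built on the projection \eqref{eq:codeword}: the embedding into $\mathcal{Q}_A$, the channel series (i), and the power identities (ii).

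\textbf{Embedding.} By \eqref{eq:codeword}, the mode coefficients of the port current \eqref{eq:fascurrent} are $\bq=\sum_a\iota_a\,\widehat{\xi}\odot\bm{\varphi}(\bu_a)$. This places $\bq$ in $\mathcal{Q}_A$ by definition. For the reverse inclusion, every element of $\mathcal{Q}_A$ with $\bu_a\in\surf$ (footprints inside $\surf$) is the coefficient vector of such a current. The remaining point is that the Fourier expansion \eqref{eq:modes} reproduces $j$ on $\surf$. This holds in $L^2(\surf)$ because the family $\{\phi_n\}$ is complete on the rectangle. So the map from $(\boldsymbol{\iota},\Utt)$ to the current is faithfully represented by $\bq$.

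\textbf{Part (i).} I would start from the spatial side. Substitute \eqref{eq:fascurrent} into the received functional, which gives $\sum_a\iota_a\int_\surf \mathbf{g}_k^{\mathsf H}(\bs)\bp\,\xi(\bar{\bs}-\bu_a)\,\dd\bar{\bs}$. Then expand the scalar $\mathbf{g}_k^{\mathsf H}\bp$ on $\surf$ in the orthonormal basis, with coefficients $\widehat{g}_{k,n}$ as defined in the statement, so that $\mathbf{g}_k^{\mathsf H}\bp=\sum_n\widehat{g}_{k,n}\phi_n^{*}$. Integrating each term against the translated profile, by exactly the change of variables used in \eqref{eq:codeword}, gives $\widehat{\xi}(\bk_n)[\bm{\varphi}(\bu_a)]_n$. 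The interchange of sum and integral is justified by Parseval: pair the $L^2$ expansion of the channel with the $L^2$ function $\xi(\cdot-\bu_a)$. For users with $z_k>0$ the channel is smooth on $\surf$, so this is routine. Identifying $h_k(\bu)$ is then immediate.

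\textbf{Part (ii), exact identity \eqref{eq:modalexact}.} The modal Grams are defined through the kernel, so $\tfrac12\bq^{\mathsf H}(\bB_R+\im\bB_X)\bq=\tfrac12\langle \sum_n q_n\phi_n\bp,\kerZ\sum_m q_m\phi_m\bp\rangle$. Since $\sum_n q_n\phi_n=j$ on $\surf$, this equals $\tfrac12\langle\bj,\kerZ\bj\rangle$, and Theorem~\ref{thm:equiv}(i) turns that into $\tfrac12\boldsymbol{\iota}^{\mathsf H}\bZ(\Utt)\boldsymbol{\iota}$. The delicate point is the reactive part, because Fourier modes individually have divergent reactive forms, as discussed after Lemma~\ref{lem:diag}. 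I would therefore state the identity as the limit of finite truncations. The partial sums converge to $j$, and $j$ is smooth, so the quadratic form of the truncations converges to the form of $j$. To make this rigorous, I would use the spectral representation of Proposition~\ref{prop:split}: the form is $\int\widehat{\zeta}_\bp|\tilde j_N|^2$, and $\tilde j_N\to\tilde j$ in a weighted space with weight $(1+\|\bk\|)$. Controlling that convergence for a truncated expansion is where I expect the main obstacle. My first choice would be to sidestep it: note that the bilinear expansion is never needed termwise, because $\bq$ is only an encoding of $j$. The identity then holds by definition of the modal form as the form of the synthesized current.

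\textbf{Diagonal limit \eqref{eq:pullback}.} Insert $q_n$ from \eqref{eq:codeword} into $\tfrac12\bq^{\mathsf H}\bB_R\bq$. Replace $\bB_R$ by its limit $\diag(\widehat{r}_\bp(\bk_n))$ using Lemma~\ref{lem:diag}, with evanescent modes contributing the ohmic floor $Z_s$. This gives $\tfrac12\sum_{a,b}\iota_a^{*}\iota_b\,|\surf|^{-1}\sum_n\widehat{r}_\bp\,\widehat{\xi}^2 e^{\im\bk_n\cdot(\bu_a-\bu_b)}$. The lattice spacing is $(2\pi/S_x)(2\pi/S_y)=4\pi^2/|\surf|$, so the inner sum is a Riemann sum of $\tfrac{1}{4\pi^2}\int\widehat{r}_\bp\widehat{\xi}^2e^{\im\bk\cdot\bar{\boldsymbol{\tau}}}\dd\bk$. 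The integrand has only an integrable rim singularity and decays through $\widehat{\xi}^2$, so the Riemann sum converges to this integral. By the convolution theorem, since $\xi$ is real and even, the integral equals $(\xi\star r_\bp\star\xi)(\bar{\boldsymbol{\tau}})$. This matches $\Real\{\bZ\}$ in \eqref{eq:portZ} and provides a consistency check with the exact identity.

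One gap remains: Lemma~\ref{lem:diag} is uniform only away from the rim $\|\bk\|=\kappa$, and the error terms $\varepsilon_{nm}$ are summed against $\widehat{\xi}$-weighted codewords. I would close it by splitting the lattice into a $\delta$-annulus around the rim and its complement. The annulus contributes $O(\sqrt{\delta})$, by the same integrability $\int k_z^{-1}\dd\bk<\infty$ used after \eqref{eq:parseval}. Then let $\delta\to0$ after $\kappa S\to\infty$.
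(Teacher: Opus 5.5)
Your main line is the paper's proof. In (i) you expand the channel and pair it with the shifted profile, where the paper expands the profile and pairs it with the channel; this is the same Parseval identity. You read \eqref{eq:modalexact} as one quadratic form evaluated on one current, as the paper does. The limit is Lemma~\ref{lem:diag}, then the Riemann sum, then the convolution theorem.

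The two arguments you add beyond the paper do not work as written. First, the truncation route to \eqref{eq:modalexact} is not merely delicate; it fails. Regard a partial sum $j_N=\sum_{n\in T_N}q_n\phi_n$ as a current on $\surf$. The full series sums to $j=0$ on the edges $s_x=\pm S_x/2$, so the trace of $j_N$ there is minus the series tail, which is generically nonzero. For $\bp$ not parallel to those edges, $|\tilde j_N|^2$ then decays only as $k_x^{-2}$ along the $k_x$ axis, while $\widehat{x}_\bp$ in \eqref{eq:xspec} grows like $|k_x|$. Hence $\langle j_N\bp,\kerX j_N\bp\rangle=+\infty$ for every $N$, by the same mechanism that makes $[\bB_X]_{nn}$ infinite. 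In particular $\tilde j_N$ does not even lie in your weighted space, so no convergence $\tilde j_N\to\tilde j$ there is possible. Keep your sidestep, which is exactly the paper's meaning of \eqref{eq:modalexact}, and drop the limit argument. For $\bB_R$ alone the termwise expansion is legitimate: $|\tilde j|\le\sqrt{|\surf|}\,\|j\|_{L^2}$ and $\widehat{r}_\bp$ is integrable on the disc, so $\kerR$ is bounded on $L^2(\surf)$.

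Second, the rim patch does not justify replacing $\bB_R$ by its diagonal. Lemma~\ref{lem:diag} is entrywise, but the form runs over $N\sim\kappa^2|\surf|$ visible modes with $|q_n|\sim|\surf|^{-1/2}$. Even far from the rim, the only bound entrywise control gives is $|\bq^{\mathsf H}[\varepsilon_{nm}]\bq|\le(\sum_n|q_n|)^2\max|\varepsilon_{nm}|$. That is of order $\kappa^4|\surf|\,S_{\min}^{-1/2}$, which grows with the aperture.

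Moreover, $O(\sqrt{\delta})$ bounds $\int k_z^{-1}\dd\bk$ over the annulus, not a lattice sum over it. Take $\bp=\hat{\mathbf{x}}$ and $S_y$ slightly above $m\lambda$. The lattice point $[0,2\pi m/S_y]^{\mathsf T}$ then sits just inside the light circle. Choosing $S_y-m\lambda$ small enough as $m\to\infty$ makes the single term $|\surf|^{-1}\widehat{r}_\bp(\bk_n)\widehat{\xi}(\bk_n)^2$ of \eqref{eq:pullback} unbounded.

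The clean argument bypasses Lemma~\ref{lem:diag}. By \eqref{eq:codeword}, $q_n=|\surf|^{-1/2}\tilde j(\bk_n)$ with $\tilde j=\widehat{\xi}\sum_a\iota_a e^{-\im\bk\cdot\bu_a}$, which is independent of $S$. So $\tfrac12\sum_n\widehat{r}_\bp(\bk_n)|q_n|^2$ is a Riemann sum for $\tfrac{1}{8\pi^2}\int\widehat{r}_\bp|\tilde j|^2\dd\bk$. By \eqref{eq:parseval}, that integral is the exact $P_{\rm em}=\tfrac12\boldsymbol{\iota}^{\mathsf H}\Real\{\bZ(\Utt)\}\boldsymbol{\iota}$ for every aperture. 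Everything therefore reduces to Riemann-sum convergence for one fixed function. The only real issue is a lattice-point condition keeping the lattice away from the light circle along the sequence of apertures. The paper's proof passes over this point with the same appeal to rim integrability, so you inherit the looseness, but your patch does not remove it.
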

\begin{IEEEproof}
(i) By linearity, $\int\mathbf{g}_k^{\mathsf H}\bj=\sum_a\iota_a\int_\surf\mathbf{g}_k^{\mathsf H}(\bs)\bp\,\xi(\bar{\bs}-\bu_a)\,\dd\bar{\bs}$. The single-port case of \eqref{eq:codeword} is the mode expansion of the shifted profile, $\xi(\cdot-\bu)=\sum_n\widehat{\xi}(\bk_n)\,[\bm{\varphi}(\bu)]_n\,\phi_n$ in $L^2(\surf)$, and integrating it against $\mathbf{g}_k^{\mathsf H}\bp\in L^2(\surf)$ term by term (Parseval) produces \eqref{eq:chanseries}.
(ii) Exactness: the two sides of \eqref{eq:modalexact} are the mode-basis and port-basis evaluations of the same quadratic form $\tfrac12\langle\bj,\kerZ\bj\rangle$ of \eqref{eq:cplxpower}, equal by Parseval. Limit: Lemma~\ref{lem:diag} replaces $\bB_R$ by $\diag\{\widehat{r}_\bp(\bk_n)\}$. Substituting \eqref{eq:codeword} produces the double port sum over the lattice sum in \eqref{eq:pullback}, whose cell area $4\pi^2/|\surf|$ identifies it as the Riemann sum of $\tfrac{1}{4\pi^2}\int\widehat{r}_\bp(\bk)\,\widehat{\xi}(\bk)^2e^{\im\bk\cdot\bar{\boldsymbol{\tau}}}\dd\bk=(\xi\star r_\bp\star\xi)(\bar{\boldsymbol{\tau}})$, convergent by the rim integrability noted after \eqref{eq:parseval} and by the decay of $\widehat{\xi}$. The same decay keeps the ohmic and reactive lattice sums finite, consistently with Theorem~\ref{thm:equiv}(iv), whereas for $\widehat{\xi}\equiv\ell$ the ohmic sum diverges.
\end{IEEEproof}

Theorem~\ref{thm:fas} supplies the dictionary this paper is built on. Positioning is modulation: the position enters only through the phase ramp $\bm{\varphi}(\bu)$. Selection is codebook restriction: the attainable coefficient vectors are exactly the image of the admissible positions under $\bu\mapsto\widehat{\xi}\odot\bm{\varphi}(\bu)$, so constraining where the ports may sit constrains which codewords exist. The full surface yields the continuous manifold \eqref{eq:manifold}, a subregion $\mathcal{R}\subset\surf$ its sub-manifold, and a candidate grid $\mathcal{G}$ the finite codebook $\{\widehat{\xi}\odot\bm{\varphi}(\bu_g)\}_{\bu_g\in\mathcal{G}}$, so activating $A$ ports picks $A$ codewords: port selection is codeword selection. The price of activating several codewords is not additive. It is the quadratic form \eqref{eq:pullback}, mutual coupling, resistive and reactive alike.

\begin{remark}[Movable and fluid antennas unified]\label{rem:unif}
The two architectures of the flexible-positioning literature are the two position sets of Theorem~\ref{thm:fas}: continuous repositioning (MA \cite{ZhuMA2024TWC}) optimizes over the full manifold \eqref{eq:manifold}, while port selection (FAS \cite{New2024MIMOFAS}) restricts it to a grid's finite codebook. Every statement of this paper, kernels, constraints, gradients, and bounds, applies verbatim to both, and the algorithm of Section~\ref{sec:opt} mirrors the split: projected-gradient ascent for the former, FFT-greedy codeword selection for the latter.
\end{remark}

Two consequences follow.

\begin{corollary}[Half-wavelength lattice]\label{cor:halflambda}
Under the scalar reduction $\rho_0=j_0$, ports on the $\lambda/2$ grid satisfy $\bC(\Utt)=\bI_A$: classical uncoupled MIMO is a flexible-position antenna frozen on the half-wavelength lattice. Under the polarized kernel \eqref{eq:vartheta}, the resistive nulls solve $(x^2-1)\sin x+x\cos x=0$ for $c=0$ and $\tan x=x$ for $c=1$, displacing the first nulls to $\approx0.44\lambda$ and $\approx0.72\lambda$ as in \cite{WangCAPA2025}, while at $x=m\pi$ (broadside, $c=0$)
\begin{equation}\label{eq:residual}
\rho=\frac{3(-1)^m}{2\pi^2m^2},\qquad
\chi=\frac{3(-1)^{m+1}}{2}\Big(\frac{1}{m\pi}-\frac{1}{(m\pi)^3}\Big):
\end{equation}
the coupling surviving on the classical grid is first order and predominantly reactive ($\chi\approx0.43$ against $\rho\approx0.15$ at $\lambda/2$), motivating the voltage constraint \eqref{eq:voltcon} even far from the superdirective regime.
\end{corollary}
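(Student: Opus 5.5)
The plan is direct evaluation of the closed form \eqref{eq:vartheta} of Theorem~\ref{thm:equiv}(iii). I would substitute the elementary expressions $j_0=\sin x/x$, $j_1=\sin x/x^2-\cos x/x$, $j_2=(3/x^3-1/x)\sin x-3\cos x/x^2$, and the corresponding $y_\ell$ (with $y_0=-\cos x/x$, $y_1=-\cos x/x^2-\sin x/x$, $y_2=(-3/x^3+1/x)\cos x-3\sin x/x^2$). Since the ports are identical translates, the diagonal of $\bC(\Utt)$ is $\rho(\mathbf 0)=1$ by construction. The whole corollary therefore reduces to statements about $\rho$ and $\chi$ sampled at the grid separations, for $a\ne b$.

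\emph{Scalar claim.} Under the scalar reduction (Remark~\ref{rem:sinc}), $[\bC]_{ab}=j_0(\kappa\|\bu_a-\bu_b\|)$. On a $\lambda/2$ lattice every nonzero separation is $(\lambda/2)\sqrt{m_x^2+m_y^2}$. Axis-aligned pairs give $x=m\pi$, so $j_0=0$. For diagonal pairs, however, $x=\pi\sqrt{m_x^2+m_y^2}$ is generally not a multiple of $\pi$. So I would state the claim for collinear placements along a principal axis, which is the classical linear-array setting, and flag this restriction explicitly. There the identity $\bC=\bI_A$ is immediate.

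\emph{Null equations.} Set $c=0$ and form $j_0-j_1/x$:
\[
\frac{\sin x}{x}-\frac{\sin x}{x^3}+\frac{\cos x}{x^2}=\frac{(x^2-1)\sin x+x\cos x}{x^3}.
\]
This vanishes exactly when $(x^2-1)\sin x+x\cos x=0$. For $c=1$, add $j_2$. The $\sin x/x$ and $1/x^3$ terms combine to leave $2\sin x/x^3-2\cos x/x^2$, i.e.
\[
\frac{2(\sin x-x\cos x)}{x^3},
\]
whose zeros are $\tan x=x$. The first roots, $x\approx2.78$ and $x\approx4.493$, would be located numerically and converted via $d=x\lambda/(2\pi)$, giving $\approx0.44\lambda$ and $0.715\lambda$.

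\emph{Residuals.} At $x=m\pi$ we have $\sin x=0$ and $\cos x=(-1)^m$. The $c=0$ real part is $\tfrac32\cdot(-1)^m/(m\pi)^2$, which matches \eqref{eq:residual}. For $\chi$, compute
\[
y_0-\frac{y_1}{x}=-\frac{\cos x}{x}+\frac{\cos x}{x^3}+\frac{\sin x}{x^2}.
\]
At $x=m\pi$ this equals $(-1)^{m+1}\big(1/(m\pi)-1/(m\pi)^3\big)$. The only delicate point is the sign convention: the normalization of $\chi$ is inherited from $\kerX=-\kappa Z_0\Real\{\bG\}$ together with \eqref{eq:reG}, and these two minus signs cancel. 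This yields $\chi=\tfrac32\big(y_0-y_1/x\big)$, as in \eqref{eq:vartheta}.

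\emph{Numerics.} At $m=1$, $\rho=3/(2\pi^2)\approx0.152$ in magnitude and $\chi=1.5\,(0.3183-0.0323)\approx0.43$.

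I expect the main obstacle to be the scope of the scalar lattice claim for non-collinear 2-D grids, rather than any of the computations.
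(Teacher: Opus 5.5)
Your proposal is correct and is essentially the paper's proof. Like the paper, you substitute the elementary spherical Bessel forms directly into \eqref{eq:vartheta}: your expanded $2(\sin x-x\cos x)/x^3$ for $c=1$ is exactly the paper's $2j_1/x$ (obtained there from the recurrence $j_2=3j_1/x-j_0$), and your residuals come from the same evaluations at $x=m\pi$.

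Your caveat on the scalar claim is a legitimate sharpening of what the paper's one-line argument ``$j_0(m\pi)=0$'' glosses over: $\bC=\bI_A$ needs every pairwise separation to be an integer multiple of $\lambda/2$, which fails for diagonal pairs of a 2-D lattice ($j_0(\pi\sqrt2)\approx-0.22$). The only slip is numerical: the first $c=0$ root is $x\approx2.744$, not $2.78$, although it still gives $\approx0.44\lambda$.
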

\begin{IEEEproof}
$j_0(m\pi)=0$ gives the first claim. For $c=0$, $\rho\propto j_0-j_1/x$, and multiplying by $x^3$ gives the first null equation. For $c=1$, $j_2=3j_1/x-j_0$ collapses $\rho\propto 2j_1/x$, whose zeros solve $\tan x=x$. At $x=m\pi$: $j_1=-(-1)^m/x$, $j_2=-3(-1)^m/x^2$, $y_0=-(-1)^m/x$, $y_1=-(-1)^m/x^2$. Substitution in \eqref{eq:vartheta} yields \eqref{eq:residual}.
\end{IEEEproof}

The second consequence explains why single-port FAS analyses could afford to ignore coupling. For $A=1$, the power reduces to $P_{\rm em}=\tfrac{R_r}{2}(1+\varepsilon)|\iota|^2$ independently of the position $\bu$ (up to aperture-edge corrections captured by the finite-$\surf$ Gram \eqref{eq:Bexact}): with one active port there is nothing to couple to. The moment $A\ge 2$, the feasible power and voltage sets become position-dependent through $\bC_\varepsilon(\Utt)$ and $\bZ(\Utt)$, and any optimization that moves ports while radiating from several of them must be coupling-aware. That optimization is the subject of the next section.

\section{Coupling-Aware Sum-Rate Maximization}\label{sec:opt}
Armed with the equivalence of Section~\ref{sec:wavenumber}, we now pose the design problem. The transmitter is the $A$-port flexible-position antenna \eqref{eq:fascurrent}. Each port is shared by the $K$ streams through per-user precoders $\bw_k\in\mathbb{C}^A$ in feed-current units, $\boldsymbol{\iota}=\sum_{k}\bw_k x_k$. By Theorem~\ref{thm:fas}(i), user $k$ observes
\begin{align}
y_k&=\bh_k^{\mathsf H}(\Utt)\sum_{i=1}^{K}\bw_i\,x_i+n_k,\nonumber\\
[\bh_k(\Utt)]_a&= h_k(\bu_a)\ \text{as in \eqref{eq:chanseries}},\label{eq:rxmu}
\end{align}
so that its signal-to-interference-plus-noise ratio is
\begin{equation}\label{eq:sinr}
\Gamma_k(\Utt,\mathbf{W})=\frac{|\bh_k^{\mathsf H}\bw_k|^2}{\sum_{i\ne k}|\bh_k^{\mathsf H}\bw_i|^2+\sigma_k^2},\qquad \mathbf{W}=[\bw_1,\dots,\bw_K].
\end{equation}
Absorbing $R_r/2$ into $P$ and $R_r$ into $V_{\max}$, the master problem collects \eqref{eq:powcon}-\eqref{eq:voltcon}:
\begin{align}
(\mathrm{P}1): \max_{\Utt,\,\mathbf{W}}\ &\ \sum_{k=1}^{K}\alpha_k\log_2\big(1+\Gamma_k(\Utt,\mathbf{W})\big)\label{eq:P1}\\
\text{s.t.}\ &\ \mathrm{C}_1:\ \textstyle\sum_{k}\bw_k^{\mathsf H}\,\bC_\varepsilon(\Utt)\,\bw_k\le P,\nonumber\\
&\ \mathrm{C}_2:\ \textstyle\sum_{k}\big|\mathbf{e}_a^{\mathsf T}\bar{\bZ}(\Utt)\bw_k\big|^2\le \bar V_{\max}^2,\ a=1,\dots,A,\nonumber\\
&\ \mathrm{C}_3:\ \bu_a\in\surf,\quad \|\bu_a-\bu_b\|\ge d_{\min},\ \ a\ne b,\nonumber
\end{align}
where $\alpha_k\ge0$ are rate weights, $\bar\bZ=\bZ/R_r$ is the normalized \eqref{eq:Znorm}, and $d_{\min}\ge\Delta$ is the hardware-exclusion radius of the elements. In the wavenumber picture, (P1) is a mode-codeword selection problem: the optimization ranges over $K$ precoded superpositions of $A$ constant-modulus codewords \eqref{eq:codeword} drawn from the manifold \eqref{eq:manifold}, with the spectral weights of Proposition~\ref{prop:split} pricing every configuration. Positioning and mode selection are the same variable seen in two domains, and Section~\ref{sec:wnimpl} turns the correspondence into the data structure the algorithm runs on. Restricting $\bu_a$ to a candidate grid turns (P1) into codebook selection, handled by the FFT machinery of Section~\ref{sec:wnimpl}, while we solve the continuous form.

Problem (P1) is non-convex on three counts: the sum rate in $\mathbf{W}$, the trigonometric dependence on $\Utt$ through both $\bh_k$ and the kernels, and their multiplication in the constraints. We adopt an alternating strategy whose two blocks are individually well understood, and whose novelty resides in what the blocks see: both the channel and the feasible set move with $\Utt$. The voltage constraint $\mathrm{C}_2$ is handled inside each block below.

\subsection{Coupling whitening and the precoder block}\label{sec:wmmse}
Fix $\Utt$. The matrix $\bC_\varepsilon(\Utt)=\varepsilon\bI_A+\bC(\Utt)$ is positive definite for every configuration: $\bC$ is positive semidefinite as the pullback of the positive-semidefinite radiation kernel (Section~\ref{sec:coupling}), and the ohmic floor shifts the spectrum by $\varepsilon>0$. Define the whitened variables
\begin{equation}\label{eq:whiten}
\tilde{\bw}_k=\bC_\varepsilon^{1/2}\bw_k,\qquad \tilde{\bh}_k=\bC_\varepsilon^{-1/2}\bh_k,
\end{equation}
which leave all inner products invariant, $\bh_k^{\mathsf H}\bw_i=\tilde{\bh}_k^{\mathsf H}\tilde{\bw}_i$, and reduce $\mathrm{C}_1$ to the standard $\sum_k\|\tilde{\bw}_k\|^2\le P$. In whitened coordinates the block subproblem is the canonical MU-MISO weighted sum-rate maximization, which we solve by the weighted-MMSE (WMMSE) equivalence \cite{Shi2011}. Recall its logic, since we will reuse its pieces: for any receive scalar $u_k$ the mean-square error of stream $k$ is
\begin{equation}\label{eq:msedef}
e_k(u_k,\tilde{\mathbf{W}})=\big|1-u_k^{*}\tilde{\bh}_k^{\mathsf H}\tilde{\bw}_k\big|^2+|u_k|^2\Big(\sum_{i\ne k}\big|\tilde{\bh}_k^{\mathsf H}\tilde{\bw}_i\big|^2+\sigma_k^2\Big),
\end{equation}
minimized by the MMSE receiver and attaining $e_k^{\rm mmse}=(1+\Gamma_k)^{-1}$, so that maximizing $\sum_k\alpha_k\log_2(1+\Gamma_k)$ is equivalent to minimizing $\sum_k v_k e_k-\alpha_k\log_2 v_k$ jointly in $(\mathbf{u},\mathbf{v},\tilde{\mathbf{W}})$. Block-coordinate minimization gives the familiar updates, iterated until the objective stabilizes:
\begin{align}
u_k&=\Big(\textstyle\sum_{i}|\tilde{\bh}_k^{\mathsf H}\tilde{\bw}_i|^2+\sigma_k^2\Big)^{-1}\tilde{\bh}_k^{\mathsf H}\tilde{\bw}_k,\label{eq:wmmse-u}\\
v_k&=\alpha_k/e_k(u_k,\tilde{\mathbf{W}}),\label{eq:wmmse-v}\\
\tilde{\bw}_k&=v_k u_k\Big(\textstyle\sum_{i}v_i|u_i|^2\,\tilde{\bh}_i\tilde{\bh}_i^{\mathsf H}+\mu\,\bI_A\Big)^{-1}\tilde{\bh}_k,\label{eq:wmmse-w}
\end{align}
where \eqref{eq:wmmse-w} is the stationarity condition of the quadratic weighted-MSE Lagrangian and $\mu\ge0$ is found by bisection so that $\sum_k\|\tilde{\bw}_k\|^2=P$ (or $\mu=0$ if the constraint is inactive. The norm is monotonically decreasing in $\mu$, so bisection converges). Each pass is monotonically non-decreasing in the weighted sum rate \cite{Shi2011}. When $\mathrm{C}_2$ is enforced, the whitened subproblem remains convex, each voltage cap being a second-order-cone constraint in $\mathbf{W}$, and is solved by a conic solver or, preserving closed forms, by augmenting \eqref{eq:wmmse-w} with the penalty $\sum_a\nu_a|\mathbf{e}_a^{\mathsf T}\bar\bZ\bw_k|^2$ and updating the multipliers $\nu_a\ge0$ by projected subgradient. Precoders in current units are recovered as $\bw_k=\bC_\varepsilon^{-1/2}\tilde{\bw}_k$.

\begin{remark}[Superdirectivity margin]\label{rem:superdir}
Fix a user and normalize powers by $R_r/2$: $P_\Omega=\varepsilon\|\bw\|^2$, $P_{\rm rad}=\bw^{\mathsf H}\bC\bw$, $P_{\rm em}=\bw^{\mathsf H}\bC_\varepsilon\bw$, so the agnostic norm constraint prices ohmic dissipation, not radiation. Under $P_{\rm em}\le P$, beamforming attains $\max_{\bw}|\bh^{\mathsf H}\bw|^2=P\,\bh^{\mathsf H}\bC_\varepsilon^{-1}\bh$ and, on the eigenpairs $(\lambda_i,\mathbf{v}_i)$ of $\bC$ with weights $\beta_i=|\mathbf{v}_i^{\mathsf H}\bh|^2/\|\bh\|^2$,
\begin{align}
\frac{\|\bh\|^2 P}{\varepsilon+\lambda_{\max}(\bC)}\;&\le\;
\|\bh\|^2P\sum_i\frac{\beta_i}{\varepsilon+\lambda_i}\nonumber\\
&\le\;\frac{\|\bh\|^2 P}{\varepsilon+\lambda_{\min}(\bC)}\;\le\;\frac{\|\bh\|^2 P}{\varepsilon},\label{eq:sdchain}
\end{align}
against the uncoupled benchmark $\|\bh\|^2P$. The margin $\sum_i\beta_i/(\varepsilon+\lambda_i)$ exceeds one as soon as the channel places weight on eigenmodes with $\lambda_i<1-\varepsilon$, and it grows as those eigenvalues shrink, each term saturating at $\beta_i/\varepsilon$. Since $\lambda_{\min}(\bC)\to0$ at sub-wavelength spacings, gains beyond the port count follow, the mechanism of \cite{Liao2026MC,ZhangEMFAA2026}. The electrical size that governs this regime is the span of the active constellation rather than the hosting aperture: $\bC$ depends on the ports only through their pairwise separations, so a compact constellation on a large surface behaves as a small effective surface, and flexible positioning tunes this size directly, an observation quantified in Section~\ref{sec:numerics}. The chain also shows the caps: reaching $\|\bh\|^2P/\varepsilon$ requires $\|\bw\|^2\to P/\varepsilon$, hence vanishing efficiency $\lambda/(\varepsilon+\lambda)$ along the exploited mode (losses \cite{WangCAPA2025,Hansen1981}), while the inflated current norm raises the stored energy through $\Imag\{\bar\bZ\}$, policed by $\mathrm{C}_2$.
\end{remark}

\subsection{The position block: analytic gradients}\label{sec:posgrad}
Fix $\mathbf{W}$ and consider the positions. Introduce the partial Lagrangian
\begin{align}
\mathcal{L}(\Utt)\!&=\!\!\sum_{k}\!\alpha_k\log_2\!\big(1\!+\!\Gamma_k(\Utt,\mathbf{W})\big)\!-\!\mu\Big(\!\sum_k \bw_k^{\mathsf H}\bC_\varepsilon(\Utt)\bw_k\!-\!P\Big)\nonumber\\
&-\sum_a\nu_a\Big(\sum_k\big|\mathbf{e}_a^{\mathsf T}\bar\bZ(\Utt)\bw_k\big|^2\!-\!\bar V_{\max}^2\Big),\label{eq:lagr}
\end{align}
with $\mu$ and the multipliers $\nu_a\ge0$ inherited from the precoder block, $\nu_a=0$ whenever $\mathrm{C}_2$ is inactive. Every ingredient of the ascent is closed form, the practical dividend of the field-first model: no numerical differentiation of impedance matrices, no Sylvester machinery \cite{Liao2026MC}.

\begin{lemma}[Closed-form position gradients]\label{lem:grad}
Let $x=\kappa\tau$, $\hat{\boldsymbol{\tau}}=\bar{\boldsymbol{\tau}}/\tau$, $c=\bar{\bp}^{\mathsf T}\hat{\boldsymbol{\tau}}$, $\beta_k=\alpha_k/\ln2$, $S_k=|\bh_k^{\mathsf H}\bw_k|^2$, $I_k=\sum_{i\ne k}|\bh_k^{\mathsf H}\bw_i|^2+\sigma_k^2$, and $\mathbf{d}_{k,a}=\nabla_{\bu_a}[\bh_k]_a$. Then:
(i) the coupling gradient is
\begin{equation}\label{eq:gradrho}
\nabla\rho(\bar{\boldsymbol{\tau}})=\frac{3}{2}\Big[\big(f_0'(x)+j_2'(x)\,c^2\big)\,\kappa\,\hat{\boldsymbol{\tau}}
+\frac{2\,j_2(x)\,c}{\tau}\big(\bI_2-\hat{\boldsymbol{\tau}}\hat{\boldsymbol{\tau}}^{\mathsf T}\big)\bar{\bp}\Big],
\end{equation}
with $f_0'=-j_1-j_0/x+3j_1/x^2$ and $j_2'=j_1-3j_2/x$, and the substitution $j_\ell\mapsto h_\ell$ yields $\nabla\vartheta$. Consequently
\begin{equation}\label{eq:gradC}
\nabla_{\bu_a}\sum_k \bw_k^{\mathsf H}\bC_\varepsilon\bw_k
=2\sum_{k}\Real\Big\{w_{a,k}^{*}\sum_{b\ne a}w_{b,k}\,\nabla\rho(\bu_a-\bu_b)\Big\}.
\end{equation} and, with $v_{a,k}\triangleq\mathbf{e}_a^{\mathsf T}\bar\bZ(\Utt)\bw_k$ and $J_V\triangleq\sum_{a}\nu_{a}\sum_k|v_{a,k}|^2$ the voltage penalty of Section~\ref{sec:wmmse},
\begin{equation}\label{eq:gradV}
\nabla_{\bu_a}J_V\!=\!2\!\sum_k\!\Real\!\Big\{\!\sum_{b\ne a}\!\nabla\vartheta(\bu_a-\bu_b)\!\big(\nu_av_{a,k}^{*}w_{b,k}+\nu_bv_{b,k}^{*}w_{a,k}\big)\!\Big\},
\end{equation}
its second term recording that moving one port perturbs the drive voltage of every neighbor. Setting $\vartheta\mapsto\rho$, $\nu_a\mapsto1$, $v_{a,k}\mapsto w_{a,k}$ recovers \eqref{eq:gradC}.

(ii) the rate gradient is
\begin{equation}\label{eq:gradrate}
\nabla_{\bu_a}\!\sum_k\alpha_k\log_2(1+\Gamma_k)
=\sum_k \frac{2\beta_k}{I_k+S_k}\,
\Real\big\{\zeta_{k,a}\,\mathbf{d}_{k,a}^{*}\big\},
\end{equation}
with $\zeta_{k,a}=(\bw_k^{\mathsf H}\bh_k)\,w_{a,k}-\Gamma_k\sum_{i\ne k}(\bw_i^{\mathsf H}\bh_k)\,w_{a,i}$.

(iii) the channel gradient is, in general, $\mathbf{d}_{k,a}=-\int_\surf \mathbf{g}_k^{\mathsf H}(\bs)\bp\,\nabla\xi(\bar{\bs}-\bu_a)\,\dd\bar{\bs}$, evaluable for any profile through the spectral series of Section~\ref{sec:wnimpl}. In the small-element regime, with $\boldsymbol{\tau}=\br_k-[\bu_a^{\mathsf T},0]^{\mathsf T}$, $R=\|\boldsymbol{\tau}\|$, and $\boldsymbol{\Pi}=[\bI_2\ \mathbf{0}]\in\mathbb{R}^{2\times3}$,
\begin{align}
\mathbf{d}_{k,a}&=-\im\kappa Z_0\ell\,\boldsymbol{\Pi}\Big[\big(a'(R)\,\boldsymbol{\psi}_k^{\mathsf H}\bp+b'(R)\,(\boldsymbol{\psi}_k^{\mathsf H}\hat{\boldsymbol{\tau}})(\hat{\boldsymbol{\tau}}^{\mathsf T}\bp)\big)\hat{\boldsymbol{\tau}}\nonumber\\
&+\frac{b(R)}{R}\big(\bI_3-\hat{\boldsymbol{\tau}}\hat{\boldsymbol{\tau}}^{\mathsf T}\big)\big((\hat{\boldsymbol{\tau}}^{\mathsf T}\bp)\,\boldsymbol{\psi}_k^{*}+(\boldsymbol{\psi}_k^{\mathsf H}\hat{\boldsymbol{\tau}})\,\bp\big)\Big],\label{eq:gradh}
\end{align}
with $a'$, $b'$ reported in Appendix~\ref{app:grad}.
\end{lemma}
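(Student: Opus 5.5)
The lemma is a set of direct differentiations of closed forms already established, so the plan is chain-rule calculus organized around \eqref{eq:vartheta}, the SINR \eqref{eq:sinr}, and the Green-function form \eqref{eq:greenclosed}.

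\emph{Part (i).} Write $\rho(\bar{\boldsymbol{\tau}})=\tfrac32\big[f_0(x)+j_2(x)c^2\big]$ with $f_0(x)=j_0(x)-j_1(x)/x$, $x=\kappa\tau$, $c=\bar{\bp}^{\mathsf T}\hat{\boldsymbol{\tau}}$. Two planar gradients are needed: $\nabla x=\kappa\hat{\boldsymbol{\tau}}$, and $\nabla\hat{\boldsymbol{\tau}}=(\bI_2-\hat{\boldsymbol{\tau}}\hat{\boldsymbol{\tau}}^{\mathsf T})/\tau$. The second gives $\nabla c^2=2c(\bI_2-\hat{\boldsymbol{\tau}}\hat{\boldsymbol{\tau}}^{\mathsf T})\bar{\bp}/\tau$, which is exactly the second term of \eqref{eq:gradrho}. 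The radial derivatives follow from the standard recurrences $j_0'=-j_1$, $j_1'=j_0-2j_1/x$, and $j_2'=j_1-3j_2/x$. Indeed,
\[
f_0'=-j_1-\frac{j_1'}{x}+\frac{j_1}{x^2}=-j_1-\frac{j_0}{x}+\frac{3j_1}{x^2}.
\]
The $y_\ell$ functions obey the same recurrences, so the substitution $j_\ell\mapsto h_\ell$ gives $\nabla\vartheta$.

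For \eqref{eq:gradC}:
\begin{itemize}
\item Only the entries $[\bC]_{ab}$ and $[\bC]_{ba}$ with $b\ne a$ depend on $\bu_a$.
\item $\rho$ is even, so $\nabla_{\bu_a}\rho(\bu_b-\bu_a)=\nabla\rho(\bu_a-\bu_b)$.
\item The two symmetric terms $w_{a,k}^*w_{b,k}$ and $w_{b,k}^*w_{a,k}$ are conjugates, so their sum is $2\Real\{\cdot\}$.
\end{itemize}
For \eqref{eq:gradV}, differentiate $|v_{c,k}|^2$ to get $2\Real\{v_{c,k}^*\nabla v_{c,k}\}$. The voltage $v_{c,k}$ depends on $\bu_a$ in two ways:
\begin{itemize}
\item for $c=a$, through all terms $\vartheta(\bu_a-\bu_b)w_{b,k}$;
\item for $c=b\ne a$, through the single term $\vartheta(\bu_b-\bu_a)w_{a,k}$.
\end{itemize}
Using evenness of $\vartheta$ again yields the two summands. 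The diagonal of $\bar\bZ$ is position-independent by \eqref{eq:Znorm}, so it contributes nothing.

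\emph{Part (ii).} Write $\log(1+\Gamma_k)=\log(S_k+I_k)-\log I_k$. The dependence on $\bu_a$ enters only through $[\bh_k]_a$, with $\nabla_{\bu_a}\bh_k^{\mathsf H}\bw_i=\mathbf{d}_{k,a}^*w_{a,i}$. Hence
\[
\nabla_{\bu_a}|\bh_k^{\mathsf H}\bw_i|^2=2\Real\{(\bw_i^{\mathsf H}\bh_k)w_{a,i}\mathbf{d}_{k,a}^*\}.
\]
Collect the terms as follows:
\begin{itemize}
\item the $i=k$ term carries the coefficient $1/(S_k+I_k)$;
\item each interference term carries $1/(S_k+I_k)-1/I_k=-\Gamma_k/(S_k+I_k)$.
\end{itemize}
This produces $\zeta_{k,a}$ with the common prefactor $2\beta_k/(I_k+S_k)$.

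\emph{Part (iii).} From \eqref{eq:codeword} and Theorem~\ref{thm:fas}(i), $[\bh_k]_a=\int\mathbf{g}_k^{\mathsf H}\bp\,\xi(\bar{\bs}-\bu_a)\,\dd\bar{\bs}$. Differentiating under the integral gives the general formula, since $\nabla_{\bu}\xi(\bar{\bs}-\bu)=-\nabla\xi$. In the small-element limit, $\xi\to\ell\delta$ and $[\bh_k]_a=\im\kappa Z_0\ell\,\boldsymbol{\psi}_k^{\mathsf H}\bG(\boldsymbol{\tau})\bp$ with $\boldsymbol{\tau}=\br_k-\bu_a$. Then $\nabla_{\bu_a}=-\boldsymbol{\Pi}\nabla_{\boldsymbol{\tau}}$, and \eqref{eq:greenclosed} is differentiated with the gradients $a'(R)\hat{\boldsymbol{\tau}}$, $b'(R)\hat{\boldsymbol{\tau}}$, and $\nabla\hat{\boldsymbol{\tau}}=(\bI_3-\hat{\boldsymbol{\tau}}\hat{\boldsymbol{\tau}}^{\mathsf T})/R$.

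The main obstacle is sign and conjugation bookkeeping here. Three points need care:
\begin{itemize}
\item the minus sign from $\nabla_{\bu}=-\nabla_{\boldsymbol{\tau}}$ against the leading minus in \eqref{eq:gradh};
\item the transpose of $\boldsymbol{\psi}_k^{\mathsf H}$, which appears as $\boldsymbol{\psi}_k^*$;
\item the fact that $\mathbf{d}_{k,a}$ is the gradient of $[\bh_k]_a$ and not of its conjugate. This matters because \eqref{eq:contchan} defines $\mathbf{g}_k^{\mathsf H}$, so $h_k$ is itself the conjugated channel, and consistency with the $\mathbf{d}^*$ in (ii) must be checked.
\end{itemize}
I would settle these by checking each formula against a one-dimensional finite difference.
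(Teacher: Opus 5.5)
Your proposal is correct and follows essentially the paper's own derivation: the recurrences and evenness in (i), differentiation of $S_k,I_k$ in (ii) (your $\log(S_k+I_k)-\log I_k$ split is just the paper's quotient rule rewritten), and the Jacobian $-\boldsymbol{\Pi}^{\mathsf T}$ with the product rule in (iii), where the bookkeeping you defer to finite differences is already fixed by your setup: $-\boldsymbol{\Pi}$ is the leading minus, $\nabla_{\boldsymbol{\tau}}(\boldsymbol{\psi}_k^{\mathsf H}\hat{\boldsymbol{\tau}})=(\bI_3-\hat{\boldsymbol{\tau}}\hat{\boldsymbol{\tau}}^{\mathsf T})\boldsymbol{\psi}_k^{*}/R$ supplies $\boldsymbol{\psi}_k^{*}$, and with $[\bh_k]_a=h_k(\bu_a)$ as stipulated in \eqref{eq:rxmu}, $\mathbf{d}_{k,a}$ is the gradient of exactly the scalar whose conjugate you differentiated in (ii) (that \eqref{eq:rxmu} writes $\bh_k^{\mathsf H}\boldsymbol{\iota}$ where \eqref{eq:chanseries} gives $\sum_a\iota_a h_k(\bu_a)$ is a conjugation slip in the model, not in the lemma). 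One item neither you nor the paper verifies is the closing remark of (i), and it is off by a factor of two: substituting $\vartheta\mapsto\rho$, $\nu_a\mapsto1$, $v_{a,k}\mapsto w_{a,k}$ into \eqref{eq:gradV} yields $4\sum_k\Real\{w_{a,k}^{*}\sum_{b\ne a}w_{b,k}\nabla\rho(\bu_a-\bu_b)\}$, twice \eqref{eq:gradC}, because $J_V$ is quadratic in the kernel (whence the prefactor from $\nabla|v|^2=2\Real\{v^{*}\nabla v\}$) while $\sum_k\bw_k^{\mathsf H}\bC_\varepsilon\bw_k$ is linear in it.
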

\begin{IEEEproof}
Please refer to Appendix~\ref{app:grad}.
\end{IEEEproof}

The update is $\bu_a^{+}=\Pi_{\surf}\big[\bu_a+\eta\,\nabla_{\bu_a}\mathcal{L}\big]$ for each port, with $\Pi_\surf$ the clip onto the aperture, the step $\eta$ by Armijo backtracking on the objective of $(\mathrm{P}1)$ after rescaling $\mathbf{W}$ to feasibility, and a pairwise push-out enforcing $\mathrm{C}_3$.

\subsection{Wavenumber-domain relaxation, upper bound, and initialization}\label{sec:wnimpl}
Theorem~\ref{thm:fas} is exploited algorithmically in three ways. First, as the data structure the algorithm runs on: writing $\boldsymbol{\Phi}(\Utt)\triangleq[\bm{\varphi}(\bu_1),\dots,\bm{\varphi}(\bu_A)]$ and stacking the tapered spectra $\bar{g}_{k,n}\triangleq\widehat{g}_{k,n}\,\widehat{\xi}(\bk_n)$ over the $N=O(\kappa^2|\surf|)$ propagating lattice modes, the channels of \eqref{eq:rxmu} and the gradients of \eqref{eq:gradrate} are codeword inner products,
\begin{equation}\label{eq:wnchan}
\bh_k(\Utt)=\boldsymbol{\Phi}(\Utt)^{\mathsf H}\bar{\mathbf{g}}_k,\qquad
\mathbf{d}_{k,a}=-\im\sum_{n}\bar{g}_{k,n}\,\bk_n\,[\bm{\varphi}(\bu_a)]_n,
\end{equation}
so the optimization consumes only the visible-mode coefficients per user. The definition of $\widehat{g}_{k,n}$ involves the user location through \eqref{eq:contchan}. The point is operational: once the coefficients are available, estimated from port measurements as in \cite{New2025Oversampling}, the iterations never evaluate the continuous channel again, and no geometry or propagation model enters the loop.

Second, as a computable performance certificate. Dropping the manifold constraint \eqref{eq:manifold}, i.e., letting the mode coefficients range over all of $\mathbb{C}^N$, defines the holographic relaxation
\begin{align}
(\mathrm{P}2):\ \max_{\tilde{\bq}_1,\dots,\tilde{\bq}_K\in\mathbb{C}^{N}}\ &\sum_{k}\alpha_k\log_2\Big(1+\frac{|\widehat{\mathbf{g}}_k^{\mathsf T}\tilde{\bq}_k|^2}{\sum_{i\ne k}|\widehat{\mathbf{g}}_k^{\mathsf T}\tilde{\bq}_i|^2+\sigma_k^2}\Big)\nonumber\\
\text{s.t.}\ &\ \sum_{k}\tilde{\bq}_k^{\mathsf H}\,\bar\bB_R\,\tilde{\bq}_k\le P,\label{eq:P2}
\end{align}
with $\widehat{\mathbf{g}}_k$ stacking the raw spectra $\widehat{g}_{k,n}$ and $\bar\bB_R\triangleq\bB_R/R_r$ the resistive Gram in the $R_r/2$ units of $\mathrm{C}_1$. Every port configuration synthesizes a current whose mode coefficients lie in $\mathcal{Q}_A\subset\mathbb{C}^N$, with the same received functional by Theorem~\ref{thm:fas}(i) and the same physical power by \eqref{eq:modalexact}. Hence the optimal value of $(\mathrm{P}2)$ upper-bounds that of $(\mathrm{P}1)$ for every $A$ and every port geometry, and its gap to a returned design certifies the price of $A$-port sparsity. Please note that $\mathrm{C}_2$ is dropped without harm to the bound, since removing constraints only enlarges the feasible set, and could not be carried into the mode domain regardless: port voltages exist only once ports do, and by Proposition~\ref{prop:split} a visible-only modal current has divergent reactive energy, the very reason reactive bookkeeping lives at port level. The constraint re-enters the moment the relaxed solution is projected onto ports. The relaxation is also where coupling costs least: by Lemma~\ref{lem:diag}, $\bar\bB_R$ is asymptotically $\diag\{\widehat{r}_\bp(\bk_n)/R_r\}$, so whitening reduces to an entrywise scaling of the $\tilde{q}_{n}$, after which $(\mathrm{P}2)$ is a standard MU-MISO weighted sum-rate problem solved by the WMMSE iterations \eqref{eq:wmmse-u}-\eqref{eq:wmmse-w} at $O(NK^2)$ per pass via the matrix-inversion lemma. For finite apertures the exact Gram \eqref{eq:Bexact} can be retained at a one-time $O(N^2)$ cost.

Third, as an initialization that removes the multistart burden of the position block, whose landscape oscillates on the wavelength scale. From the relaxed solution, ports are extracted by matching pursuit on the codeword dictionary: the correlation $(\widehat{\xi}\odot\bm{\varphi}(\bu))^{\mathsf H}\tilde{\bq}$, evaluated over a regular grid, is a single two-dimensional inverse FFT. Its peak fixes a port, the atom is deflated, and $A$ repetitions cost $O(A\,N\log N)$. Running the precoder block once at the extracted positions already returns a complete design, the inversion of $(\mathrm{P}2)$ onto the FAS manifold. The resulting schemes, together with the coupling-agnostic and half-wavelength baselines, are defined and compared in Section~\ref{sec:numerics}. When hardware confines the ports to a candidate grid of $N_g\gg A$ locations, the same FFT supplies all candidate channels at once, and a coupling-aware greedy selection costs $O(N_gA^2)$ per added port, marginal gains being rank-one updates of $\bC_\varepsilon^{-1}$ by Sherman-Morrison. The closed-form route of Section~\ref{sec:posgrad} and the spectral route \eqref{eq:wnchan} thus coexist: the former when a propagation model is trusted, the latter when only measured spectra are available.

\subsection{Algorithm, convergence, and complexity}
The overall procedure alternates the two blocks:
\begin{enumerate}
\item[1)] initialize $\Utt^{(0)}$ by the matching pursuit of Section~\ref{sec:wnimpl}, or by the $\lambda/2$ grid for which $\bC\approx\bI_A$ by Corollary~\ref{cor:halflambda} and $\mathbf{W}^{(0)}$ (whitened equal-power MRT).
\item[2)] \textbf{precoder block:} with $\Utt$ fixed, whiten via \eqref{eq:whiten} and iterate \eqref{eq:wmmse-u}-\eqref{eq:wmmse-w} to convergence.
\item[3)] \textbf{position block:} with $\mathbf{W}$ fixed, perform $T$ projected-gradient ascent steps on \eqref{eq:lagr} using \eqref{eq:gradC}-\eqref{eq:gradh}.
\item[4)] repeat 2)-3) until the gain falls below tolerance.
\end{enumerate}
Step 2) is monotone in the objective \cite{Shi2011}. Step 3) is monotone by the backtracking rule. Hence the sum-rate sequence is non-decreasing and, being bounded above (finite $P$, bounded channels), converges. Standard arguments then yield convergence to a partial stationary point of (P1). Per outer iteration: whitening costs one $A\times A$ eigendecomposition, $O(A^3)$. A WMMSE pass costs $O(K^2A)$ for the inner products in \eqref{eq:wmmse-u} and $O(KA^2+A^3)$ per bisection step for assembling and inverting the matrix in \eqref{eq:wmmse-w}. Each of the $T$ position steps costs $O(KA^2)$ for \eqref{eq:gradC} and $O(KA)$ for \eqref{eq:gradrate}-\eqref{eq:gradh}. All evaluations are closed form: no full-wave solver, no matrix-derivative subroutine, no Sylvester solves in the loop. On the spectral side, a WMMSE pass on $(\mathrm{P}2)$ costs $O(NK^2)$ and MP costs $O(A\,N\log N)$. End to end, the modal side is therefore independent of the port count, HUB costing $O(I_BNK^2)$ per design and MP adding only the extraction plus one precoder pass, while the port side totals $O(TI_B(KA^2+A^3))$ and overtakes the modal cost near $A\approx(NK^2/T)^{1/3}$, about five in the setup of Section~\ref{sec:numerics}: the wavenumber domain absorbs the aperture, the port domain absorbs the hardware, and the FFT bridge between them costs less than either.

\section{Numerical Results}\label{sec:numerics}

\subsection{Scenario, parameters, and baselines}
The transmit surface is $\surf=6\lambda\times6\lambda$, carrying $N=137$ visible lattice modes. $K=3$ single-antenna users with random polarizations are dropped uniformly over a $7.2\lambda\times7.2\lambda$ footprint at depths $z_k\in[6,14]\lambda$. The Rayleigh distance of the aperture is $2D^2/\lambda\approx144\lambda$, so every user lies deep in the Fresnel zone, the regime discussed after Lemma~\ref{lem:diag}: the spectra $\widehat{g}_{k,n}$ spread over many visible modes and no far-field assumption is invoked anywhere. Ports are Gaussian elements of effective support $\Delta\approx\lambda/5$ ($\sigma=\lambda/20$), for which $\widehat{\xi}^{\,2}\ge0.9$ over the visible disc, so the small-element closed forms of Lemmas~\ref{lem:closed} and~\ref{lem:grad} apply within a few percent, and the rim shading $\widehat{\xi}^{\,2}(\kappa)=e^{-(\kappa\sigma)^2}\approx0.91$ reappears below as the alias scale of Fig.~\ref{fig:alias}. Channels are generated by the exact dyadic Green function \eqref{eq:greenclosed} and projected on the lattice, and each user spectrum is normalized to unit average port gain, so the budget-to-noise ratio is the operating SNR. The loss ratio is $\varepsilon=0.05$, the polarization is $\bp=\hat{\mathbf{x}}$, and ports respect $d_{\min}=0.15\lambda$, living on a $48\times48$ candidate grid of spacing $\approx\lambda/8$ when discrete. The alternating optimization runs $10$ outer iterations of $30$ WMMSE passes and $4$ position steps, and every curve averages $6$ independent user drops.

Five schemes are compared. HUB, the holographic upper bound, is the WMMSE solution of the relaxation $(\mathrm{P}2)$ over all $N$ modal coefficients with the manifold constraint dropped, an upper bound for every port configuration. MP, the matching-pursuit design, is the inversion of $(\mathrm{P}2)$ onto the codebook: positions extracted by deflation on the $\lambda/8$ candidate grid followed by a single precoder pass, the port-selection (FAS) instance of Remark~\ref{rem:unif}. AO is the alternating optimization of Sections~\ref{sec:wmmse}-\ref{sec:posgrad} over continuous positions, the movable-antenna instance, initialized at MP. AGN is the coupling-agnostic design, the same alternating optimization run under the identity model $\bC_\varepsilon=\bI_A$ and then evaluated under the physical constraint \eqref{eq:powcon}, which quantifies what the classical norm model over-reports. MIMO is coupling-aware precoding on the frozen half-wavelength grid of Corollary~\ref{cor:halflambda}. Three gaps organize it: HUB to AO prices $A$-port sparsity, MP to AO prices refinement, and AO to AGN prices coupling awareness.

\begin{figure}[!t]
\centering
\includegraphics[width=.8\columnwidth]{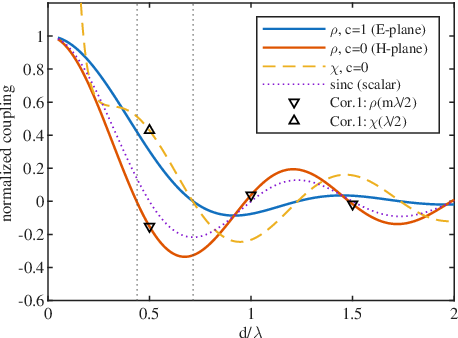}
\caption{Coupling kernels versus spacing with the nulls and half-wavelength residuals of Corollary~\ref{cor:halflambda} overlaid, and the scalar sinc for reference.}
\label{fig:kernels}
\end{figure}
\subsection{Convergence}
Fig.~\ref{fig:conv} shows the sum rate along the outer iterations for $A\in\{4,8\}$ and three initializations at $10$~dB, averaged over four drops. Every trace is monotone, as guaranteed by the block structure of Section~\ref{sec:opt}. The spectral initialization starts $1$ to $2$~bit above the half-wavelength start and reaches, in a single deterministic run, the level of the best random restart, confirming the role assigned to the relaxation in Section~\ref{sec:wnimpl}: the position landscape oscillates on the wavelength scale, and $(\mathrm{P}2)$ sees through it.

\begin{figure}[!t]
\centering
\includegraphics[width=.8\columnwidth]{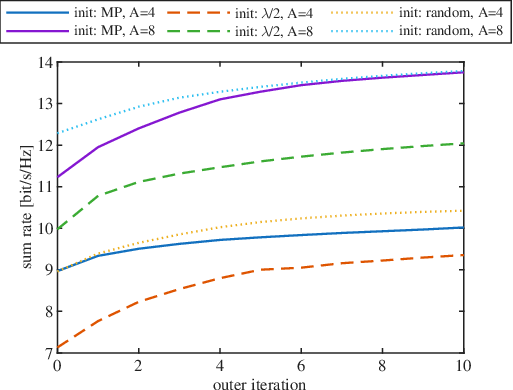}
\caption{Convergence of the alternating optimization for $A\in\{4,8\}$ under three initializations at $10$~dB: the spectral start matches the best random restart without multistart.}
\label{fig:conv}
\end{figure}

\subsection{Kernels, diagonalization, and the pullback}
Fig.~\ref{fig:kernels} plots the closed-form kernels of Lemma~\ref{lem:closed} against spacing. The resistive curves cross zero at $0.44\lambda$ and $0.72\lambda$, the roots of the null equations of Corollary~\ref{cor:halflambda} marked by the dotted lines, and pass exactly through the residual markers $3(-1)^m/2\pi^2m^2$ at $d=m\lambda/2$. The reactive kernel diverges as the spacing closes and still equals $0.43$ at $d=\lambda/2$, three times the resistive residual there: even the classical grid is reactively coupled, the observation that motivates carrying the voltage constraint \eqref{eq:voltcon} far from the superdirective regime.

Fig.~\ref{fig:diag} validates Lemma~\ref{lem:diag}. For one lattice mode pinned at each fractional radius $\|\bk_n\|/\kappa\in\{0.15,0.5,0.85\}$, the relative diagonalization error of $\bB_R$, diagonal deviation plus leakage to the nearest neighbors, decays monotonically with the aperture and straddles the $S^{-1/2}$ guide. The near-rim representative starts an order of magnitude above the interior ones, the numerical face of the H\"older rim term in the proof, and the interior modes are already within a few percent of their limits at $S=8\lambda$: the diagonal pricing that $(\mathrm{P}2)$ relies on is accurate at realistic apertures.

\begin{figure}[!t]
\centering
\includegraphics[width=.8\columnwidth]{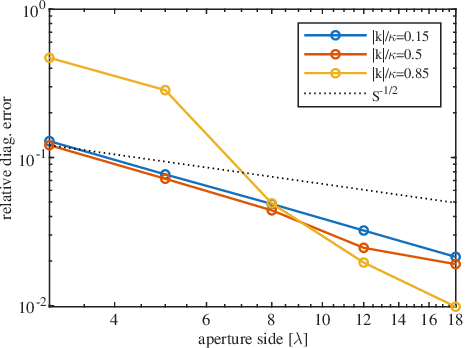}
\caption{Relative diagonalization error of $\bB_R$ for modes pinned at three radii, against the $S^{-1/2}$ guide of Lemma~\ref{lem:diag}: rim modes converge last.}
\label{fig:diag}
\end{figure}

\begin{figure}[!t]
\centering
\includegraphics[width=.8\columnwidth]{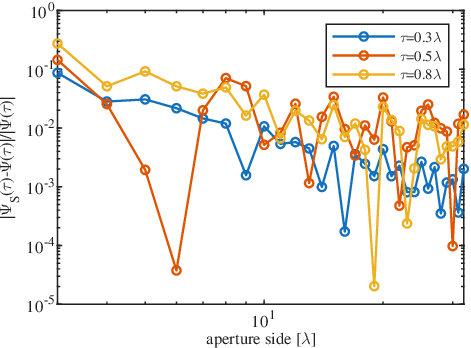}
\caption{Lattice pullback versus its limit \eqref{eq:pullback}: the error is the coupling to periodic images, a $1/S$ envelope with deterministic interference dips.}
\label{fig:alias}
\end{figure}

Fig.~\ref{fig:alias} examines the pullback of Theorem~\ref{thm:fas}(ii) at finite aperture. Let $\Psi_S$ denote the partial sum in \eqref{eq:pullback} at aperture side $S$, before the limit that defines $\Psi$. The relative error $|\Psi_S-\Psi|/|\Psi|$ decays with a $1/S$ envelope but is strongly non-monotone, with near-cancellations reaching $10^{-4}$. Both features are structural. By Poisson summation, sampling the spectrum on the lattice periodizes space, so $\Psi_S-\Psi$ is exactly the coupling to periodic images at distances $\|\bar{\boldsymbol{\tau}}+S\mathbf{m}\|$ through the kernel tail $\tfrac32(1-c^2)\sin(\kappa r)/\kappa r$ of Lemma~\ref{lem:closed}, scaled by the element factor $\widehat{\xi}^{\,2}(\kappa)\approx0.91$: the envelope is the tail decay, and a dip is an aperture at which the images interfere destructively, as at $\bar{\tau}=\lambda/2$ where the axial images sit on exact nulls. The factor is a knob: larger elements shade the rim and suppress every image at once, so the port size can be calibrated to bury the alias floor below other errors.

\begin{figure}[!t]
\centering
\includegraphics[width=.8\columnwidth]{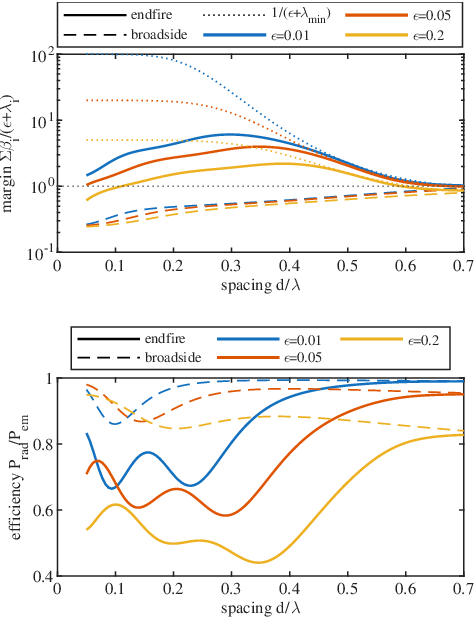}
\caption{Superdirectivity margin of \eqref{eq:sdchain} for endfire (solid) and broadside (dashed) steering with the envelope $1/(\varepsilon+\lambda_{\min})$ (dotted), and the radiation efficiency beneath.}
\label{fig:margin}
\end{figure}

\subsection{The superdirectivity margin}
Fig.~\ref{fig:margin} evaluates Remark~\ref{rem:superdir} on a four-port row. For endfire steering the margin exceeds one below $d\approx0.55\lambda$ and peaks near $d\approx0.3\lambda$, approaching the best-case envelope $1/(\varepsilon+\lambda_{\min})$ that saturates at $1/\varepsilon$, while broadside steering stays below one at all spacings despite its near-unit efficiency: the two channels load opposite ends of the eigenvalue spectrum in \eqref{eq:sdchain}. The efficiency panel shows the price, radiated fractions dropping to $0.45$ to $0.7$ exactly where the margin peaks. The basis-free identity behind the figure is that the margin factors as efficiency times the directivity enhancement per radiated watt: at $\varepsilon=0.05$ and $d=0.3\lambda$, a margin of $3.8$ with efficiency $0.6$ corresponds to a pattern delivering $6.3$ times more power per radiated watt toward the endfire user than the uncoupled benchmark. Gains beyond the port count are therefore real, bought with current, and capped by the losses, the accounting that constraint $\mathrm{C}_2$ makes explicit. A geometric reading sharpens the mechanism. Since $\bC$ depends on the ports only through their separations, the operative length is the constellation span $L=(A-1)d$ rather than the hosting aperture: the synthesized current is windowed to the ports' reach, its spectrum resolves no finer than $2\pi/L$, and a far-field direction loads invisible-side eigenmodes only within the grazing sector $1-\sin\theta\lesssim\lambda/L$. Flexible positioning tunes the effective surface, contracting it to widen the superdirective sector or spreading it to restore the classical regime.

\subsection{Multi-user performance}
Fig.~\ref{fig:triplet} compares HUB, MP, and AO versus the number of ports at $5$ and $15$~dB. HUB is flat by construction, its $N$ modal degrees of freedom being independent of $A$, and quantifies the price of sparsity: about $19$~bit at $A=2$ and still $13$ to $15$~bit at $A=8$ at the higher SNR. The refinement gap is small throughout, MP landing within $1$ to $2$~bit of AO at every operating point: once the relaxed optimum is known, its FFT projection is nearly optimal, and the iterative machinery buys polish rather than substance. For port-selection hardware this is the practical message of Theorem~\ref{thm:fas}: solve the modal problem once, project, and precode.

\begin{figure}[!t]
\centering
\includegraphics[width=.8\columnwidth]{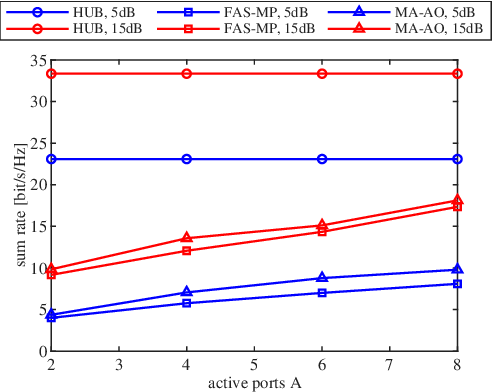}
\caption{HUB, MP, and AO versus the number of ports at $5$ and $15$~dB: the large HUB gap prices sparsity, the small MP gap prices refinement.}
\label{fig:triplet}
\end{figure}

\begin{figure}[!t]
\centering
\includegraphics[width=.8\columnwidth]{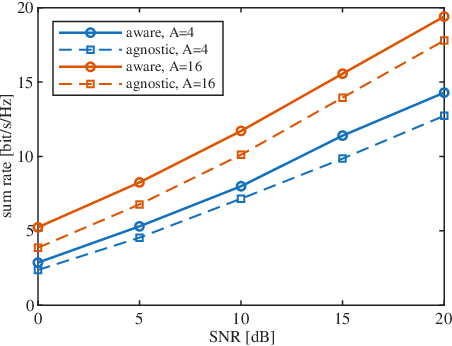}
\caption{Coupling-aware versus agnostic optimization on one placement pad: awareness is free at $A=4$ and decisive under the forced packing of $A=16$.}
\label{fig:aware}
\end{figure}

Fig.~\ref{fig:aware} isolates the value of coupling awareness. All curves share one placement pad, and only the port count changes: with $A=4$ the optimizer keeps spacings around $\lambda$, where Corollary~\ref{cor:halflambda} predicts near-vanishing resistive coupling, and indeed the aware and agnostic designs coincide. With $A=16$ the same pad forces spacings near $0.35\lambda$, where Fig.~\ref{fig:kernels} shows both kernels large, and the aware design opens a gap over the agnostic one that widens with SNR. The two regimes in one figure state the engineering conclusion of the paper: on an unconstrained aperture, optimized layouts self-decouple and the classical model survives, while whenever form factor or port density forces sub-half-wavelength packing, the impedance kernel must enter the design loop, and Sections~\ref{sec:wmmse}-\ref{sec:posgrad} show that it can do so entirely in closed form.

\section{Conclusion and Future Works}\label{sec:conclusion}
This paper placed the flexible-position antenna where it electromagnetically lives, on a holographic surface. A field-first multi-user model carries the full impedance kernel in its constraints, is provably equivalent to the multiport circuit models of the FAS/MA literature, and obeys an exact wavenumber dichotomy: the light circle separates the resistive spectrum, which prices average power, from the reactive spectrum, which prices drive voltages. In this representation the antenna is a constant-modulus wavenumber codeword, positioning is modulation and selection is codebook restriction, the pulled-back coupling reproduces the polarized nulls of continuous-aperture theory and collapses to uncoupled MIMO at half wavelength up to a quantified, predominantly reactive residual, and the coupling-aware sum-rate problem admits an alternating solution with closed-form gradients throughout. Future work includes matching-network co-design over the reactive part, the exact discrete-codebook variant of Section~\ref{sec:wnimpl}, coupling-aware channel estimation extending \cite{New2025Oversampling}, and propagation kernels beyond free space.

\appendices

\section{Proof of Lemma~\ref{lem:closed}}\label{app:closed}
With $g(R)=e^{\im\kappa R}/(4\pi R)$, $\nabla R=\hat{\boldsymbol{\tau}}$, and $\nabla\hat{\boldsymbol{\tau}}=(\bI_3-\hat{\boldsymbol{\tau}}\hat{\boldsymbol{\tau}}^{\mathsf T})/R$, one has $\nabla g=g'\hat{\boldsymbol{\tau}}$ with $g'=(\im\kappa-1/R)g$ and, by the product rule,
\begin{equation*}
\nabla\nabla^{\mathsf T}g=g''\,\hat{\boldsymbol{\tau}}\hat{\boldsymbol{\tau}}^{\mathsf T}+\frac{g'}{R}\big(\bI_3-\hat{\boldsymbol{\tau}}\hat{\boldsymbol{\tau}}^{\mathsf T}\big),\
g''=\Big(\!-\kappa^2-\frac{2\im\kappa}{R}+\frac{2}{R^2}\Big)g,
\end{equation*}
where $g''$ follows by differentiating $g'=(\im\kappa-1/R)g$ once more. Substituting into \eqref{eq:portpower},
\begin{equation*}
\bG=\Big(g+\frac{g'}{\kappa^2R}\Big)\bI_3+\frac{g''-g'/R}{\kappa^2}\,\hat{\boldsymbol{\tau}}\hat{\boldsymbol{\tau}}^{\mathsf T},
\end{equation*}
and collecting powers of $(\kappa R)^{-1}$ gives $a$, $b$ of \eqref{eq:hankelform}. For the Hankel form, with $h_0(x)=-\im e^{\im x}/x$ and $h_1(x)=-e^{\im x}(x+\im)/x^2$,
\begin{equation*}
\frac{\im\kappa}{4\pi}\Big(h_0-\frac{h_1}{x}\Big)=\frac{\kappa}{4\pi}\,e^{\im x}\Big(\frac1x+\frac{\im}{x^2}-\frac{1}{x^3}\Big)=a(R),
\end{equation*}
and, by the recurrence $h_2=(3/x)h_1-h_0=e^{\im x}(\im/x-3/x^2-3\im/x^3)$,
\begin{equation*}
\frac{\im\kappa}{4\pi}\,h_2=-\frac{\kappa}{4\pi}\,e^{\im x}\Big(\frac1x+\frac{3\im}{x^2}-\frac{3}{x^3}\Big)=b(R),
\end{equation*}
proving \eqref{eq:greenclosed}. The split \eqref{eq:imG}-\eqref{eq:reG} is immediate from $\im h_\ell=\im j_\ell-y_\ell$. Finally, each entry of \eqref{eq:imG} is an even entire series in $x$ with $j_0(0)=1$, $j_1(x)/x\to1/3$, $j_2(0)=0$, whence $\Imag\{\bG(\mathbf{0})\}=\tfrac{\kappa}{4\pi}\cdot\tfrac23\bI_3=\tfrac{\kappa}{6\pi}\bI_3$. The stated divergences of \eqref{eq:reG} follow from $y_0\sim-1/x$, $y_1\sim-1/x^2$, $y_2\sim-3/x^3$. $\hfill\blacksquare$

\section{Proof of Proposition~\ref{prop:power}}\label{app:poynting}
Step 1 (plane-wave decomposition \eqref{eq:planewave}). Aligning the polar axis with $\hat{\boldsymbol{\tau}}$ and substituting $u=\cos\theta$ in $\dd\Omega=\sin\theta\,\dd\theta\,\dd\phi$,
\begin{equation*}
\int_{\mathbb{S}^2}e^{\im x\,\hat{\mathbf{k}}\cdot\hat{\boldsymbol{\tau}}}\dd\Omega=2\pi\int_{-1}^{1}e^{\im xu}\dd u=4\pi\,j_0(x),
\end{equation*}
so $\Imag\{g(R)\}=\tfrac{\kappa}{4\pi}j_0(\kappa R)=\tfrac{\kappa}{16\pi^2}\int_{\mathbb{S}^2}e^{\im\kappa\hat{\mathbf{k}}\cdot\boldsymbol{\tau}}\dd\Omega$. Applying $(\bI_3+\kappa^{-2}\nabla\nabla^{\mathsf T})$ under the integral sign, where each $\nabla$ pulls down $\im\kappa\hat{\mathbf{k}}$ and produces the transverse projector $\bI_3-\hat{\mathbf{k}}\hat{\mathbf{k}}^{\mathsf T}$, gives the right-hand side of \eqref{eq:planewave}. That it equals $\Imag\{\bG\}$ is verified against Lemma~\ref{lem:closed} by computing the two angular moments. First, the scalar moment above. Second, by rotational symmetry about $\hat{\boldsymbol{\tau}}$, $\mathbf{T}\triangleq\int_{\mathbb{S}^2}\hat{\mathbf{k}}\hat{\mathbf{k}}^{\mathsf T}e^{\im x\hat{\mathbf{k}}\cdot\hat{\boldsymbol{\tau}}}\dd\Omega=\alpha\bI_3+\beta\hat{\boldsymbol{\tau}}\hat{\boldsymbol{\tau}}^{\mathsf T}$. The trace ($\operatorname{tr}\{\hat{\mathbf{k}}\hat{\mathbf{k}}^{\mathsf T}\}=1$) gives $3\alpha+\beta=4\pi j_0$, while projecting on $\hat{\boldsymbol{\tau}}$ and differentiating $\int_{-1}^1e^{\im xu}\dd u=2j_0$ twice gives $\alpha+\beta=2\pi\int_{-1}^1u^2e^{\im xu}\dd u=-4\pi j_0''=4\pi(j_0-2j_1/x)$, using $j_0''=-j_0+2j_1/x$ from the spherical Bessel equation. Solving, $\alpha=4\pi j_1/x$ and $\beta=-4\pi j_2$ (recurrence $j_0+j_2=3j_1/x$), whence
\begin{equation*}
\frac{\kappa}{16\pi^2}\big[4\pi j_0\,\bI_3-\mathbf{T}\big]=\frac{\kappa}{4\pi}\Big[\Big(j_0-\frac{j_1}{x}\Big)\bI_3+j_2\hat{\boldsymbol{\tau}}\hat{\boldsymbol{\tau}}^{\mathsf T}\Big]=\Imag\{\bG\}. 
\end{equation*}

Step 2 (Poynting identity). In phasor form under $e^{-\im2\pi ft}$, the free-space Maxwell equations with the impressed sheet current read $\nabla\times\boldsymbol{\mathcal{E}}=\im\omega\mu_0\boldsymbol{\mathcal{H}}$ and $\nabla\times\boldsymbol{\mathcal{H}}=\mathbf{J}-\im\omega\varepsilon_0\boldsymbol{\mathcal{E}}$, with $\mathbf{J}(\br)=\bj(\bar{\br})\delta(z)$. For any two fields, expanding in components with the Levi-Civita symbol, $\nabla\cdot(\mathbf{a}\times\mathbf{b})=\mathbf{b}\cdot(\nabla\times\mathbf{a})-\mathbf{a}\cdot(\nabla\times\mathbf{b})$. Choosing $\mathbf{a}=\boldsymbol{\mathcal{E}}$, $\mathbf{b}=\boldsymbol{\mathcal{H}}^{*}$ and substituting the first equation and the conjugate of the second,
\begin{equation*}
\nabla\cdot(\boldsymbol{\mathcal{E}}\times\boldsymbol{\mathcal{H}}^{*})
=\im\omega\mu_0\|\boldsymbol{\mathcal{H}}\|^2-\im\omega\varepsilon_0\|\boldsymbol{\mathcal{E}}\|^2-\boldsymbol{\mathcal{E}}\cdot\mathbf{J}^{*}.
\end{equation*}
Solving for the source term, multiplying by $\tfrac12$, integrating over the ball $V_{r_0}\supset\surf$, and applying the divergence theorem (the sifting property of $\delta(z)$ reduces the source integral to the aperture),
\begin{equation}\label{eq:appPoy}
-\frac12\int_\surf \bj^{\mathsf H}\boldsymbol{\mathcal{E}}\,\dd\bs
=\frac12\oint_{r=r_0}\!\!\!\!\!\!\!\!(\boldsymbol{\mathcal{E}}\times\boldsymbol{\mathcal{H}}^{*})\cdot\hat{\mathbf{n}}\,\dd S
+\im\,2\omega\big(W_e(r_0)-W_m(r_0)\big),
\end{equation}
with $W_e(r_0)=\tfrac{\varepsilon_0}{4}\int_{V_{r_0}}\|\boldsymbol{\mathcal{E}}\|^2$ and $W_m(r_0)=\tfrac{\mu_0}{4}\int_{V_{r_0}}\|\boldsymbol{\mathcal{H}}\|^2$ (the $1/4$ because time-averaging a squared cosine halves the instantaneous $\varepsilon_0/2$, $\mu_0/2$ densities).

Step 3 (far-field limit and the flux). Identity \eqref{eq:appPoy} holds for every $r_0$, with an $r_0$-independent left side. At finite radius, however, the flux is complex, carrying the escaping power together with reactive circulation across the sphere, so the split of the right side into radiated power and stored energy is not yet meaningful. The limit $r_0\to\infty$ is what disentangles the two, because reactive circulation is a near-field phenomenon: for $r\to\infty$ with $\bs$ over the bounded $\surf$, expanding $\|\br-\bs\|=r-\hat{\mathbf{r}}\cdot\bs+O(r^{-1})$ in the phase, $\|\br-\bs\|^{-1}=r^{-1}+O(r^{-2})$ in the amplitude, and using the projector limit of Lemma~\ref{lem:closed},
\begin{align}
\boldsymbol{\mathcal{E}}(\br)&=\mathbf{f}(\hat{\mathbf{r}})\,\frac{e^{\im\kappa r}}{r}+O(r^{-2}),\label{eq:farfield}\\
\mathbf{f}(\hat{\mathbf{r}})&=\frac{\im\kappa Z_0}{4\pi}\big(\bI_3-\hat{\mathbf{r}}\hat{\mathbf{r}}^{\mathsf T}\big)\tilde{\bj}(\kappa\hat{\mathbf{r}}),\nonumber
\end{align}
with $\boldsymbol{\mathcal{H}}=\hat{\mathbf{r}}\times\boldsymbol{\mathcal{E}}/Z_0+O(r^{-2})$ and $\hat{\mathbf{r}}^{\mathsf T}\mathbf{f}=0$. By $\mathbf{a}\times(\mathbf{b}\times\mathbf{c})=\mathbf{b}(\mathbf{a}\cdot\mathbf{c})-\mathbf{c}(\mathbf{a}\cdot\mathbf{b})$ and transversality,
$\boldsymbol{\mathcal{E}}\times\boldsymbol{\mathcal{H}}^{*}=\|\mathbf{f}\|^2\hat{\mathbf{r}}/(Z_0r^2)+O(r^{-3})$: real and outward at leading order. Since $\dd S=r_0^2\dd\Omega$, the flux converges to the real limit
\begin{align}
&\lim_{r_0\to\infty}\frac12\oint(\boldsymbol{\mathcal{E}}\times\boldsymbol{\mathcal{H}}^{*})\cdot\hat{\mathbf{n}}\,\dd S
=\frac{1}{2Z_0}\int_{\mathbb{S}^2}\|\mathbf{f}(\hat{\mathbf{k}})\|^2\dd\Omega\label{eq:fluxlimit}\\
&=\frac{\kappa^2Z_0}{32\pi^2}\int_{\mathbb{S}^2}\big\|\big(\bI_3-\hat{\mathbf{k}}\hat{\mathbf{k}}^{\mathsf T}\big)\tilde{\bj}(\kappa\hat{\mathbf{k}})\big\|^2\dd\Omega,\nonumber
\end{align}
its imaginary part, fed by the $O(r^{-3})$ cross terms against the $O(r_0^2)$ area, vanishing as $O(r_0^{-1})$. The same expansion gives $\mu_0\|\boldsymbol{\mathcal{H}}\|^2=\varepsilon_0\|\boldsymbol{\mathcal{E}}\|^2+O(r^{-3})$, so the individually divergent far-zone contributions to $W_e$ and $W_m$ cancel in the difference, which converges.

Step 4 (the flux is the kernel form). Substituting \eqref{eq:planewave} into the radiative quadratic form and exchanging the absolutely convergent integrals, the exponential of the difference $\bs-\bs'$ factorizes. The projector is constant in $(\bs,\bs')$, so the double aperture integral separates into $\tilde{\bj}(\kappa\hat{\mathbf{k}})^{\mathsf H}(\bI_3-\hat{\mathbf{k}}\hat{\mathbf{k}}^{\mathsf T})\tilde{\bj}(\kappa\hat{\mathbf{k}})$, the conjugation on $\bj^{\mathsf H}(\bs)$ supplying the sign flip that turns the $\bs$-integral into $\tilde{\bj}^{\mathsf H}$, and idempotency of the Hermitian projector turns the sandwich into $\|(\bI_3-\hat{\mathbf{k}}\hat{\mathbf{k}}^{\mathsf T})\tilde{\bj}(\kappa\hat{\mathbf{k}})\|^2$. Hence $\tfrac{\kappa Z_0}{2}\langle\bj,\Imag\{\bG\}\bj\rangle$ equals the right-hand side of \eqref{eq:fluxlimit}. Recalling from Section~II that both split forms are real, matching the real and imaginary parts of \eqref{eq:appPoy} against $-\tfrac12\int\bj^{\mathsf H}\boldsymbol{\mathcal{E}}=\tfrac{\kappa Z_0}{2}\langle\bj,\Imag\{\bG\}\bj\rangle+\im\tfrac12\langle\bj,\kerX\bj\rangle$ proves the three-way equality \eqref{eq:Pradforms} and item (iii) at once.

Step 5 (ohmic part: sheet limit). Let the conductor occupy $\surf\times[0,t]$ with conductivity $\sigma$ and $t$ much smaller than $\lambda$ and the skin depth, so the volumetric current density $\mathbf{J}_v$ is uniform across the thickness and $\bj=\mathbf{J}_v t$. Ohm's law $\mathbf{J}_v=\sigma\boldsymbol{\mathcal{E}}_{\rm in}$ fixes the internal field the generators must impress,
$\boldsymbol{\mathcal{E}}_{\rm in}=\bj/(\sigma t)=Z_s\bj$, distinct from the radiated $\boldsymbol{\mathcal{E}}$ of Steps 2-4. The Joule dissipation collapses to the sheet,
\begin{equation*}
\frac12\int_{\surf\times[0,t]}\!\!\!\!\!\!\!\!\!\!\!\!\sigma\|\boldsymbol{\mathcal{E}}_{\rm in}\|^2\dd V
=\frac{\sigma t}{2}\int_\surf\|Z_s\bj\|^2\dd\bs
=\frac{Z_s}{2}\int_\surf\|\bj\|^2\dd\bs=P_\Omega,
\end{equation*}
using $\sigma t Z_s^2=Z_s$, and the generator supply $\tfrac12\Real\int\bj^{\mathsf H}\boldsymbol{\mathcal{E}}_{\rm in}=P_\Omega$ is the first term of \eqref{eq:cplxpower}. Summing with Steps 2-4 yields the balance of the proposition, proving (ii) and completing the proof. $\hfill\blacksquare$

\section{Proof of Theorem~\ref{thm:equiv}}\label{app:equiv}
(i) Substituting $\bj=\sum_a\iota_a\boldsymbol{\xi}_a$ into \eqref{eq:cplxpower} and exchanging the finite sum with the integrals gives $P_{\rm cx}=\tfrac12\sum_{a,b}\iota_a^{*}\iota_b\langle\boldsymbol{\xi}_a,\kerZ\boldsymbol{\xi}_b\rangle$. Substituting $\bar{\bs}=\bu_a+\boldsymbol{\alpha}$, $\bar{\bs}'=\bu_b+\boldsymbol{\beta}$ and using that $\kerZ$ depends only on $\bar{\bs}-\bar{\bs}'=(\bu_a-\bu_b)+(\boldsymbol{\alpha}-\boldsymbol{\beta})$ yields the correlation form \eqref{eq:portZ}, the conjugate on $\xi$ dropping because the profile is real, the step that makes $\bZ$ complex symmetric, as network reciprocity demands. Uniqueness: if two complex symmetric matrices induce the same complex power for every $\boldsymbol{\iota}$, the real and imaginary parts of $\boldsymbol{\iota}^{\mathsf H}\bZ\boldsymbol{\iota}$ agree separately. Each is a Hermitian form in the real symmetric matrices $\Real\{\bZ\}$, $\Imag\{\bZ\}$, which polarization determines. Entrywise, the port voltage implied by \eqref{eq:cplxpower} is $v_a=\langle\boldsymbol{\xi}_a,\kerZ\bj\rangle$, so $[\bZ]_{ab}$ coincides with the induced-EMF mutual impedance $-\tfrac{1}{\iota_a\iota_b}\int\boldsymbol{\mathcal{E}}_b\cdot\bj_a$ of circuit theory \cite{Ivrlac2010}, term by term.

(ii) The profiles are real, so taking real and imaginary parts of \eqref{eq:portZ} commutes with the (real) integrations: $\Real\{[\bZ]_{ab}\}=\langle\boldsymbol{\xi}_a,\kerR\boldsymbol{\xi}_b\rangle$ and $\Imag\{[\bZ]_{ab}\}=\langle\boldsymbol{\xi}_a,\kerX\boldsymbol{\xi}_b\rangle$ by \eqref{eq:RXsplit}. The power identifications are Proposition~\ref{prop:power} applied to $\bj=\sum_a\iota_a\boldsymbol{\xi}_a$.

(iii) For $a\ne b$ the kernel $\zeta_\bp$ is smooth on the integration region as soon as the supports are disjoint. Expanding around $\bu_a-\bu_b$, the zeroth order contributes $(\int\xi)^2\zeta_\bp(\bu_a-\bu_b)=\ell^2\zeta_\bp(\bu_a-\bu_b)$. The first order vanishes by evenness of $\xi$. The second is $O((\kappa\Delta)^2)$ by smoothness. Evaluating via \eqref{eq:greenclosed}: $-\im\kappa Z_0\ell^2\,\bp^{\mathsf H}\bG\bp=-\im\kappa Z_0\ell^2\cdot\tfrac{\im\kappa}{4\pi}[h_0-\tfrac{h_1}{x}+h_2c^2]=R_r\vartheta$, the normalization $3/2$ in \eqref{eq:vartheta} fixed by $\Real\{\vartheta\}\to1$ as $x\to0$, and the identification of $\rho$, $\chi$ with \eqref{eq:imG}, \eqref{eq:reG} immediate from $h_\ell=j_\ell+\im y_\ell$. For the radiative self-term, $\Imag\{\bG\}$ is entire, so dominated convergence in the correlation gives $\kappa Z_0\ell^2\,\bp^{\mathsf H}\Imag\{\bG(\mathbf{0})\}\bp+O((\kappa\Delta)^2)=\kappa Z_0\ell^2\cdot\tfrac{\kappa}{6\pi}=R_r$, the textbook Hertzian radiation resistance, a consistency check of all constants.

(iv) For the reactive self-term the same argument fails, and must fail: by \eqref{eq:reG}, $\bp^{\mathsf H}\Real\{\bG(\bar{\boldsymbol{\tau}})\}\bp$ diverges as $\|\bar{\boldsymbol{\tau}}\|^{-3}$, so the correlation $(\xi\star\Real\{\bG\}\star\xi)(\mathbf{0})$ is finite for every profile with $\Delta>0$ but scales as $(\kappa\Delta)^{-3}$. Likewise $Z_s\int|\xi|^2\ge Z_s\ell^2/(\pi\Delta^2/4)$ by Cauchy-Schwarz. No universal point limit exists, and the values of $X_A$ and $R_\Omega$ are properties of the element, the information circuit models supply as element-specific constants. $\hfill\blacksquare$

\section{Proof of Proposition~\ref{prop:split}}\label{app:spectral}
The Weyl identity \cite[Ch.~2]{Chew1995} expands the scalar spherical wave into plane waves over the aperture plane: for $z\ne 0$,
\begin{equation}\label{eq:weyl}
\frac{e^{\im\kappa\|\boldsymbol{\tau}\|}}{4\pi\|\boldsymbol{\tau}\|}=\frac{\im}{8\pi^2}\int_{\mathbb{R}^2}\frac{e^{\im(\bk\cdot\bar{\boldsymbol{\tau}}+k_z|z|)}}{k_z}\,\dd\bk,\qquad k_z=\sqrt{\kappa^2-\|\bk\|^2},
\end{equation}
with $\Imag\{k_z\}\ge0$ so that evanescent components decay away from the source plane. Applying $(\bI_3+\kappa^{-2}\nabla\nabla^{\mathsf T})$ under the integral sign turns each plane wave into $(\bI_3-\mathbf{k}_{\pm}\mathbf{k}_{\pm}^{\mathsf T}/\kappa^2)e^{\im(\bk\cdot\bar{\boldsymbol{\tau}}+k_z|z|)}$ with $\mathbf{k}_\pm=[\bk^{\mathsf T},\pm k_z]^{\mathsf T}$, the sign following that of $z$. For source and observation both on the plane we take the transverse-transverse block and the limit $z\to0^{\pm}$, which is unambiguous for that block because, for tangential~$\bp$,
\begin{equation*}
\bp^{\mathsf H}\Big(\bI_3-\frac{\mathbf{k}_\pm\mathbf{k}_\pm^{\mathsf T}}{\kappa^2}\Big)\bp=1-\frac{(\bk\cdot\bar{\bp})^2}{\kappa^2}
\end{equation*}
is independent of the hemisphere and of $k_z$ (the discontinuous $z$-mixed blocks do not enter).\footnote{The exchange of $(\bI_3+\kappa^{-2}\nabla\nabla^{\mathsf T})$ with the $\bk$-integral and the limit $z\to0$ are the steps that require the distributional reading announced in the proposition: the resulting planar transform is a tempered distribution whose growing part encodes the $R^{-3}$ singularity of $\kerX$, exactly as the Fourier transform of $1/(4\pi\|\bar{\boldsymbol{\tau}}\|)$ is $1/(2\|\bk\|)$. All quadratic forms in which we use \eqref{eq:zspec} are evaluated against profiles or Dirichlet kernels whose decay makes them well defined, and this is tracked explicitly at each use.} Hence the planar transform of $-\im\kappa Z_0\,\bp^{\mathsf H}\bG\bp$ is
\begin{equation*}
-\im\kappa Z_0\cdot\im\,\frac{1-(\bk\cdot\bar{\bp})^2/\kappa^2}{2k_z}
=\kappa Z_0\,\frac{1-(\bk\cdot\bar{\bp})^2/\kappa^2}{2k_z},
\end{equation*}
and adding the flat transform $Z_s$ of the ohmic delta gives \eqref{eq:zspec}. The dichotomy \eqref{eq:rspec}-\eqref{eq:xspec} is now a pointwise statement: on the disc $k_z>0$ is real, so the radiation term is real (resistive). Outside, $k_z=\im|k_z|$ turns it into $-\im\kappa Z_0(1-(\bk\cdot\bar{\bp})^2/\kappa^2)/(2|k_z|)$, purely imaginary (reactive), with sign flipping across $|\bk\cdot\bar{\bp}|=\kappa$.

For the consistency check announced in the proof sketch, invert the disc part. In polar coordinates $\bk=\varrho[\cos\varphi,\sin\varphi]^{\mathsf T}$ with $\bar{\bp}=[1,0]^{\mathsf T}$ without loss of generality,
\begin{equation*}
\frac{1}{4\pi^2}\int_{\varrho<\kappa}\frac{1-\varrho^2\cos^2\varphi/\kappa^2}{2\sqrt{\kappa^2-\varrho^2}}\,e^{\im\bk\cdot\bar{\boldsymbol{\tau}}}\varrho\,\dd\varrho\,\dd\varphi ,
\end{equation*}
and at $\bar{\boldsymbol{\tau}}=\mathbf{0}$ the angular average of $\cos^2\varphi$ is $\tfrac12$, giving $\tfrac{1}{4\pi^2}\big[\tfrac12\cdot2\pi\kappa-\tfrac{1}{2\kappa^2}\cdot\pi\cdot\tfrac{2\kappa^3}{3}\big]=\tfrac{\kappa}{6\pi}$, where $\int_0^\kappa\varrho\,\dd\varrho/\sqrt{\kappa^2-\varrho^2}=\kappa$ and $\int_0^\kappa\varrho^3\dd\varrho/\sqrt{\kappa^2-\varrho^2}=\tfrac{2\kappa^3}{3}$: precisely $\bp^{\mathsf H}\Imag\{\bG(\mathbf{0})\}\bp$. The full inversion at $\bar{\boldsymbol{\tau}}\ne\mathbf{0}$ proceeds by the same moments combined with $\int_{\varrho<\kappa}e^{\im\bk\cdot\bar{\boldsymbol{\tau}}}k_z^{-1}\dd\bk=2\pi\sin(\kappa\tau)/\tau$ and its derivatives, and returns \eqref{eq:imG} restricted to in-plane separations, as claimed. $\hfill\blacksquare$

\section{Proof of Lemma~\ref{lem:diag}}\label{app:lemdiag}
Throughout, $\widehat{r}$ abbreviates the disc weight $\widehat{r}^{\,\rm rad}_\bp$ of \eqref{eq:Bexact}, extended by zero outside the disc.

Step 1 (lattice orthogonality and error form). $D(\bk-\bk_n)$ is the transform of the modulated indicator $\mathsf{1}_\surf(\bar{\bs})\,e^{\im\bk_n\cdot\bar{\bs}}$, so Parseval gives
\begin{equation*}
\frac{1}{4\pi^2}\int_{\mathbb{R}^2}\!\!\!\!D(\bk-\bk_n)D(\bk-\bk_m)\,\dd\bk=\!\int_{\surf}e^{\im(\bk_m-\bk_n)\cdot\bar{\bs}}\dd\bar{\bs}=|\surf|\,\delta_{nm},
\end{equation*}
exactly, lattice differences completing full periods over $\surf$. Subtracting $\widehat{r}(\bk_n)$ times this identity from \eqref{eq:Bexact},
\begin{equation}\label{eq:errsplit}
\varepsilon_{nm}=\frac{1}{4\pi^2|\surf|}\int_{\mathbb{R}^2}\big(\widehat{r}(\bk)-\widehat{r}(\bk_n)\big) D(\bk-\bk_n)\,D(\bk-\bk_m)\,\dd\bk ,
\end{equation}
so the error is driven solely by the variation of the weight. Since $|D(\bk-\bk_n)D(\bk-\bk_m)|\le\tfrac12 D(\bk-\bk_n)^2+\tfrac12 D(\bk-\bk_m)^2$, it suffices to bound $\int|\widehat{r}(\bk)-\widehat{r}(\bk_n)|\,\dd\mu_n(\bk)$ for the unit-mass measures $\dd\mu_n\triangleq D(\bk-\bk_n)^2\,\dd\bk/(4\pi^2|\surf|)$.

Step 2 (tail mass of $\mu_n$). From $|\int_{-S/2}^{S/2}e^{\im ks}\dd s|\le\min(S,2/|k|)$ and the one-dimensional Parseval identity $\int_{\mathbb{R}}S^2\sinc^2(kS/2)\,\dd k=2\pi S$, the mass of $\mu_n$ outside the ball $\mathcal{B}=\{\|\bk-\bk_n\|\le\epsilon\}$, where at least one coordinate exceeds $\epsilon/\sqrt2$, satisfies
\begin{equation}\label{eq:tailmass}
\mu_n(\mathcal{B}^{\rm c})\le\frac{1}{4\pi^2|\surf|}\cdot\frac{8\sqrt2}{\epsilon}\,2\pi\,(S_x+S_y)=O\Big(\frac{1}{\epsilon\,S_{\min}}\Big).
\end{equation}

Step 3 (three regions). Fix $\epsilon\in(0,\delta/2)$ and split $\mathbb{R}^2$ into $\mathcal{B}$, the bulk $\{\|\bk\|\le\kappa-\delta/2\}\setminus\mathcal{B}$, and the rim ring $\{\kappa-\delta/2<\|\bk\|<\kappa\}$. Outside the disc the weight vanishes and only the constant $\widehat{r}(\bk_n)$ survives, controlled by \eqref{eq:tailmass}. On $\mathcal{B}$, where the weight is smooth, the contribution is at most the modulus of continuity $\omega(\epsilon)$. On the bulk, the integrand is bounded by $2\sup_{\|\bk\|\le\kappa-\delta/2}\widehat{r}$ and the contribution is $O(1/(\epsilon S_{\min}))$ by \eqref{eq:tailmass}. On the ring, H\"older with exponents $p\in(1,2)$ and $p'=p/(p-1)$ applies: $\widehat{r}\in L^{p}(\text{ring})$ since $(\kappa^2-\varrho^2)^{-p/2}$ is integrable for $p<2$, while the tail computation of Step 2, with the constrained axis at distance at least $\delta/(2\sqrt2)$ from $\bk_n$ and raised to the power $p'$, gives $\|D(\cdot-\bk_n)^2/(4\pi^2|\surf|)\|_{L^{p'}(\text{ring})}=O(S_{\min}^{-1/p'})$.

Step 4 (conclusion). With $\epsilon=S_{\min}^{-1/2}$, $|\varepsilon_{nm}|\le\omega(S_{\min}^{-1/2})+O(S_{\min}^{-1/2})+O(S_{\min}^{-1/p'})\to0$ for any $q=1/p'<1/2$. For evanescent modes, $\|\bk_n\|\ge\kappa+\delta$, the ball lies outside the weight's support, the smooth-region term is absent, and the same bounds give $[\bB_R]_{nn}\to Z_s$. $\hfill\blacksquare$

\section{Proof of Lemma~\ref{lem:grad}}\label{app:grad}
Derivative of $\rho$ (and of $\vartheta$). From $\rho=\tfrac32[f_0(x)+j_2(x)c^2]$, $x=\kappa\tau$, $c=\bar{\bp}^{\mathsf T}\hat{\boldsymbol{\tau}}$, the chain rule needs $\nabla_{\bar{\boldsymbol{\tau}}}x=\kappa\hat{\boldsymbol{\tau}}$ and $\nabla_{\bar{\boldsymbol{\tau}}}c=\nabla(\bar{\bp}^{\mathsf T}\bar{\boldsymbol{\tau}}/\tau)=\bar{\bp}/\tau-(\bar{\bp}^{\mathsf T}\bar{\boldsymbol{\tau}})\bar{\boldsymbol{\tau}}/\tau^3=(\bI_2-\hat{\boldsymbol{\tau}}\hat{\boldsymbol{\tau}}^{\mathsf T})\bar{\bp}/\tau$, whence \eqref{eq:gradrho} with $\nabla(c^2)=2c\nabla c$. The scalar derivatives follow from the standard recurrences: $j_0'=-j_1$ and $j_\ell'=j_{\ell-1}-\tfrac{\ell+1}{x}j_\ell$, so $j_2'=j_1-3j_2/x$ and
\begin{align*}
\Big(\frac{j_1}{x}\Big)'&=\frac{j_1'x-j_1}{x^2}=\frac{(j_0-2j_1/x)x-j_1}{x^2}=\frac{j_0}{x}-\frac{3j_1}{x^2}
\\ &\Rightarrow\
f_0'=-j_1-\frac{j_0}{x}+\frac{3j_1}{x^2}.
\end{align*}
Since $y_\ell$ and $h_\ell=j_\ell+\im y_\ell$ satisfy the same recurrences, $\nabla\vartheta$ is \eqref{eq:gradrho} with $j_\ell\mapsto h_\ell$. Equation \eqref{eq:gradC} then follows from the evenness of $\rho$ and the position-independence of the diagonal of $\bC_\varepsilon$. For \eqref{eq:gradV}, write $v_{a,k}=(1+\varepsilon+\im\zeta_A)w_{a,k}+\sum_{b\ne a}\vartheta(\bu_a-\bu_b)w_{b,k}$. Differentiating $|v_{a',k}|^2$ in $\bu_a$ and using $\nabla_{\bu_a}\vartheta(\bu_{a'}-\bu_a)=-\nabla\vartheta(\bu_{a'}-\bu_a)$ with the evenness of $\vartheta$ collects the two contributions, the port's own voltage through its neighbors and every neighbor's voltage through the port, into the stated pairwise form.

Rate gradient \eqref{eq:gradrate}. Only the $a$-th entry of $\bh_k$ depends on $\bu_a$, so write $\mathbf{d}=\mathbf{d}_{k,a}=\nabla_{\bu_a}[\bh_k]_a$ and differentiate componentwise in the real vector $\bu_a$. For $S_k=|\bh_k^{\mathsf H}\bw_k|^2$: since $\bh_k^{\mathsf H}\bw_k=\sum_b[\bh_k]_b^{*}w_{b,k}$, its gradient is $w_{a,k}\mathbf{d}^{*}$, and for any complex differentiable-in-$\bu$ scalar $z$, $\nabla|z|^2=2\Real\{z^{*}\nabla z\}$. Hence $\nabla S_k=2\Real\{(\bw_k^{\mathsf H}\bh_k)\,w_{a,k}\mathbf{d}^{*}\}$, using $(\bh_k^{\mathsf H}\bw_k)^{*}=\bw_k^{\mathsf H}\bh_k$. Identically, $\nabla I_k=2\sum_{i\ne k}\Real\{(\bw_i^{\mathsf H}\bh_k)\,w_{a,i}\mathbf{d}^{*}\}$. Then, from $\Gamma_k=S_k/I_k$,
\begin{equation*}
\nabla\log_2(1+\Gamma_k)=\frac{\nabla S_k\ I_k-S_k\nabla I_k}{\ln2\ I_k^2\,(1+S_k/I_k)}
=\frac{\nabla S_k-\Gamma_k\nabla I_k}{\ln2(I_k+S_k)},
\end{equation*}
and substituting the two gradients and collecting the common factor $\mathbf{d}^{*}$ inside the real part yields \eqref{eq:gradrate} with the stated $\zeta_{k,a}$.

Channel gradient \eqref{eq:gradh}. Write $[\bh_k]_a=\im\kappa Z_0\ell\,F(\bu_a)$ with $F(\bu)=a(R)\,(\boldsymbol{\psi}_k^{\mathsf H}\bp)+b(R)\,(\boldsymbol{\psi}_k^{\mathsf H}\hat{\boldsymbol{\tau}})(\hat{\boldsymbol{\tau}}^{\mathsf T}\bp)$ and $\boldsymbol{\tau}=\br_k-[\bu^{\mathsf T},0]^{\mathsf T}$. The Jacobian of $\boldsymbol{\tau}$ with respect to $\bu$ is $-\boldsymbol{\Pi}^{\mathsf T}$ (each planar coordinate of $\bu$ subtracts from the corresponding component of $\boldsymbol{\tau}$), so $\nabla_{\bu}R=-\boldsymbol{\Pi}\hat{\boldsymbol{\tau}}$ and, differentiating $\hat{\boldsymbol{\tau}}=\boldsymbol{\tau}/R$,
\begin{align*}
\nabla_{\bu}\big(\boldsymbol{\psi}_k^{\mathsf H}\hat{\boldsymbol{\tau}}\big)&=-\frac{1}{R}\,\boldsymbol{\Pi}\big(\bI_3-\hat{\boldsymbol{\tau}}\hat{\boldsymbol{\tau}}^{\mathsf T}\big)\boldsymbol{\psi}_k^{*},\\
\nabla_{\bu}\big(\hat{\boldsymbol{\tau}}^{\mathsf T}\bp\big)&=-\frac{1}{R}\,\boldsymbol{\Pi}\big(\bI_3-\hat{\boldsymbol{\tau}}\hat{\boldsymbol{\tau}}^{\mathsf T}\big)\bp,
\end{align*}
where the first identity is obtained by writing $\boldsymbol{\psi}_k^{\mathsf H}\hat{\boldsymbol{\tau}}=\sum_i \psi_{i}^{*}\hat{\tau}_i$ and applying the Jacobian of $\hat{\boldsymbol{\tau}}$, namely $-(\bI_3-\hat{\boldsymbol{\tau}}\hat{\boldsymbol{\tau}}^{\mathsf T})\boldsymbol{\Pi}^{\mathsf T}/R$, and the symmetry of the projector. The product rule on $F$ is
\begin{align*}
\nabla_{\bu}F&=\big[a'(R)(\boldsymbol{\psi}_k^{\mathsf H}\bp)+b'(R)(\boldsymbol{\psi}_k^{\mathsf H}\hat{\boldsymbol{\tau}})(\hat{\boldsymbol{\tau}}^{\mathsf T}\bp)\big]\big(-\boldsymbol{\Pi}\hat{\boldsymbol{\tau}}\big)
\\&-\frac{b(R)}{R}\,\boldsymbol{\Pi}\big(\bI_3-\hat{\boldsymbol{\tau}}\hat{\boldsymbol{\tau}}^{\mathsf T}\big)\big[(\hat{\boldsymbol{\tau}}^{\mathsf T}\bp)\,\boldsymbol{\psi}_k^{*}+(\boldsymbol{\psi}_k^{\mathsf H}\hat{\boldsymbol{\tau}})\,\bp\big],
\end{align*}
which is \eqref{eq:gradh} after multiplying by $\im\kappa Z_0\ell$. Finally, differentiating $a$, $b$ of \eqref{eq:greenclosed} with $g'=(\im\kappa-1/R)g$:
\begin{align*}
a'(R)&=g'(R)\Big(1+\tfrac{\im}{\kappa R}-\tfrac{1}{(\kappa R)^2}\Big)+g(R)\Big(-\tfrac{\im}{\kappa R^2}+\tfrac{2}{\kappa^2R^3}\Big),\\
b'(R)&=-g'(R)\Big(1+\tfrac{3\im}{\kappa R}-\tfrac{3}{(\kappa R)^2}\Big)-g(R)\Big(-\tfrac{3\im}{\kappa R^2}+\tfrac{6}{\kappa^2R^3}\Big).
\end{align*}
Every term is elementary in $R$, so the gradient evaluates in closed form. $\hfill\blacksquare$

\bibliographystyle{IEEEtran}
\bibliography{refs}

\end{document}